\pgfplotsset{compat=1.17}
\newtheorem{theorem}{Theorem}[section]
\newtheorem{corollary}{Corollary}[section]
\newtheorem{definition}{Definition}[section]
\newtheorem{lemma}[theorem]{Lemma}
\begin{document}

\title{Privacy-Preserving Password Cracking: \\ How a Third Party Can Crack Our Password Hash Without Learning the Hash Value or the Cleartext}

\author{Norbert Tihanyi, Tamas Bisztray, Bertalan Borsos, and Sebastien Raveau \thanks{Tamas Bisztray received funding from the Research Council of Norway (forskningsradet) under Grant Agreement No. 303585 (CyberHunt project), the EU Connecting Europe Facility (CEF) programme under Grant Agreement No. INEA/CEF/ICT/A2020/2373266 (JCOP project) and the Horizon Europe programme under Grant Agreement No. 101070586 (PHOENi2X project). The views and opinions expressed herein are those of the authors only and do not necessarily reflect those of the European Union or the European Research Council Executive Agency. Norbert Tihanyi was supported by the Ministry of Innovation and Technology of Hungary from the National Research, Development and Innovation Fund through the TKP2021-NVA Funding Scheme under Project TKP2021-NVA-29.}}



\maketitle
\begin{abstract}
Using the computational resources of an untrusted third party to crack a password hash can pose a high number of privacy and security risks. The act of revealing the hash digest could in itself negatively impact both the data subject who created the password, and the data controller who stores the hash digest. 
This paper solves this currently open problem by presenting a Privacy-Preserving Password Cracking protocol (3PC), that prevents the third party cracking server from learning any useful information about the hash digest, or the recovered cleartext. 
This is achieved by a tailored anonymity set of decoy hashes, based on the concept of predicate encryption, where we extend the definition of a predicate function, to evaluate the output of a one way hash function. The protocol allows the client to maintain plausible deniability where the real choice of hash digest cannot be proved, even by the client itself. The probabilistic information the server obtains during the cracking process can be calculated and minimized to a desired level.
While in theory cracking a larger set of hashes would decrease computational speed, the 3PC protocol provides constant-time lookup on an arbitrary list size, bounded by the input/output operation per second (IOPS) capabilities of the third party server, thereby allowing the protocol to scale efficiently. We demonstrate these claims both theoretically and in practice, with a real-life use case implemented on an FPGA architecture.

\end{abstract}

\begin{IEEEkeywords}
Password security, hash cracking, k-anonymity, privacy enhancing technology, data privacy;
\end{IEEEkeywords}

\section{Introduction}

\IEEEPARstart{P}ASSWORDS are the most widely used mechanism for knowledge based user authentication~\cite{ogorman_comparing_2003}. Although tech firms such as Apple, Google and Microsoft are pushing for a \textit{passwordless future} \cite{srinivas_one_2022}, the transition will not impact every domain of identity management as alternatives often fail to provide a set of benefits already present in passwords \cite{bonneau_quest_2012}. Consequently, passwords will remain a part of our every day life, especially in areas where it is not feasible to employ technologies required for passwordless authentication \cite{bonneau_quest_2012, siddique_biometrics_2017}. Therefore, related security and also \textit{password management} practices should be kept up to date \cite{grassi_digital_2017}.

Storing passwords in cleartext poses a high security risk as attackers upon compromising the system could learn not only the passwords, but the \textit{password choosing patterns} of individuals \cite{tihanyi_unrevealed_2015}. A possible mitigation is to only store the hash of the password \cite{kamal_security_2019}. Recovering the cleartext from a hash digest is very resource intensive and is often infeasible with sufficiently long and complex passwords. Unfortunately, \textit{password choosing habits} of individuals often lack such characteristics \cite{bonneau_passwords_2015}.
There are well-known software tools such as \textit{hashcat} or \textit{John the Ripper} that can be used to perform password cracking attacks efficiently. 
Advancements in modern \textit{GPU}s and \textit{FPGA}s have rendered previously popular hashing algorithms obsolete. The same is true for \textit{password policies} regarding recommended length, character set and complexity \cite{dellamico_password_2010, weir_testing_2010, egelman_passwords_2011, choong_united_2014}.

Testing password security can be conducted for \textit{legitimate purposes}.
The main motivation behind this paper originates from a penetration testing project that took place in $2020$. During a penetration testing engagement the \textit{Red Team} was able to retrieve an \textit{NTLM} hash of an important service account. It was known, that all service account passwords are randomly generated 9 character long strings containing uppercase and lowercase letters plus numbers. Because of the sensitive nature of the project, the \textit{Red Team} could not share the exact value of the \textit{NTLM} hash. According to the signed contract, the team was not allowed to reveal any cleartext passwords from the engagement to third parties. 
There are many small organizations and freelancers conducting \textit{penetration testing} and \textit{red teaming} activities, who would rely on such services, but are prevented from doing so for privacy considerations. 
If the \textit{Red Team} transfers a password hash, the cracking server learns the exact value of the hash digest and if the cracking is successful the corresponding \textit{cleartext password} as well.

As the example shows, an entity might have a legitimate reason to crack a password hash, but could lack the computational capacity to perform the cracking process within a reasonable time. 
One solution would be to use a cloud service called \textit{"password cracking as a service"} (\textit{PCaaS}), to utilize the resources of a third party.
This can be concerning both from a security, and a privacy perspective. As an example, hashes can be used in \textit{pass the hash} attacks \cite{ewaida_pass--hash_2010} without having to crack the password itself. Moreover, if the cleartext is recovered the third party might learn privacy sensitive information as individuals often embed \textit{personally identifying information} (\textit{PII}) when constructing their passwords \cite{li_personal_2017, wang_birthday_2019, veras_semantic_2014}.

We would like to find answers to the following questions: (1) \textit{can we use the resources of an untrusted third party for the cracking process without revealing the exact value of the target hash digest} and (2) \textit{if the hash is cracked, can we prevent information disclosure on the recovered cleartext}.
In theory, this could be prevented by employing \textit{homomorphic encryption} \cite{rivest_databanks_1987, alloghani_systematic_2019}, where the third party performs operations on encrypted data. The domain of \textit{homomorphic encryption} has been the subject to a lot of attention and there were many achievements that brought us closer to practical applications in the last decade \cite{parmar_survey_2014}. 
Unfortunately, as of today no such algorithm has the efficiency that could allow us to take advantage of this technology for the realization of our goals \cite{acar_survey_2018, alloghani_systematic_2019}. To the best of our knowledge prior to this publication privacy-preserving password cracking protocols have not been considered and documented in scientific literature. Such a protocol not only satisfies the client's security and privacy needs upon using \textit{PCaaS}, but can also assist in compliance with \textit{data protection and privacy regulation}, where the data controller can document justifiable and explainable privacy protection measures upon using this \textit{privacy enhancing technology} (\textit{PET}). 
The main contributions of this paper can be summarized as:

\begin{enumerate}
\item We introduce the idea of a \textit{Privacy-Preserving Password Cracking} (3PC) protocol 
\item We extend the concept of \textit{predicate functions} to evaluate decoy hash digests that make up our anonymity set, thereby resolving data transfer and performance issues
\item We show that the \textit{probabilistic information} the \textit{cracking server} learns is not practically useful, and the protocol is resistant against attacks and \textit{foul play}
\item The protocol provides \textit{plausible deniability}, where the client can claim to have aimed for a different target
\item Demonstrations of the implemented protocol, both with \textit{"toy examples"}, and realistic use cases, showing the scalability and efficiency of the protocol 
\item The protocol ensures the client, using \textit{proof of work}, that the server exhausted the agreed search space
\end{enumerate}
The paper will be structured as follows: Section \ref{literature} overviews related literature, while Section \ref{3pcprotocol} discusses the application of predicate functions, and introduces the 3PC protocol with two toy examples. Section \ref{probfoul} presents the security and privacy aspects of 3PC, followed by Section \ref{experimantal} where we showcase a real-life example, implemented on a FPGA architecture. Section \ref{summary} concludes our results.

\section{Related literature} \label{literature}
The idea to hide valid cryptographic keys and hashes among fake ones appeared in literature 20 years ago. Arcot \cite{hoover_software_1999} systems used a list of junk RSA private keys to protect the original private key. An attacker who tries to crack the key container will recover many plausible private keys, but will not be able to tell which one is the original until he tries each to access resources via an authentication server. In 2010 Bojinov et al. \cite{bojinov_kamouflage_2010} introduced the \textit{Kamouflage system}, a theft-resistant password manager which generates sets of \textit{decoy passwords}.

Juels and Ristenpart introduced the \textit{honey encryption} scheme \cite{juels_honey_2014}, designed to produce a ciphertext which, when decrypted with incorrect keys, produces plausible-looking but bogus plaintexts called \textit{honey messages}. This makes it impossible for an attacker to tell when decryption has been successful.
In 2013 Jules and Rivest proposed a simple method \cite{juels_honeywords_2013} for improving the security of hashed passwords. The system in addition to a real password with each user’s account associates some additional \textit{honeywords} (false passwords). The attempted use of a honeyword triggers an alarm. 

Compromised Credential Checking (C3) services, such as HaveIBeenPwned and Google Password Checkup can reveal if user credentials appear in known data breaches \cite{jennifer_pullman_protect_2019,thomas_protecting_2019}.
In this setup, clients can provide an N-bit prefix of the hashed password. The server then returns all recorded breached passwords that match this prefix. The client then conducts a local final check to confirm if there is a match.
However, in this scenario sharing a small hash prefix of user passwords can significantly increase the effectiveness of remote guessing attacks \cite{li_protocols_2019}. 
The anonymity set it provides will can only come from passwords of the same prefix, which might not be an adequately large set.
On the other hand if the prefix is too small, it might return a lot of results, which in this setting would be more difficulty to download for the user.
To counter this issue Li et al. developed and tested two new protocols that offer stronger password protection and are practical for deployment \cite{li_protocols_2019} .
This approach of only sharing a part of the password hash is useful, which we upgraded for the 3PC protocol to allow the creation of a more fine tuned anonymity set.

For our protocol, understanding what distribution passwords follow will be also important.
As Hou in \cite{hou_new_2022} underlined, several research papers incorrectly assume that passwords follow a \textit{uniform distribution}. In large real password data sets certain popular passwords are used by multiple users \cite{bonneau_science_2012}.
In \cite{blocki_differentially_2016} Blocki et al. presents how a \textit{frequency list} can be created based on this observation, essentially ranking passwords from most frequent to least frequent. Inspired by Malone \cite{malone_investigating_2011}, Wang et al. showed that such passwords follow a \textit{CDF-Zipf} distribution.
These results apply to large user generated data sets where recurrence of passwords can be observed.
If we want to analyze a large corpus of machine generated passwords results may be different. Such sets are usually created as a dictionary for password cracking. If the generation algorithm simply outputs random strings, the resulting data set by definition follows a uniform distribution.
However, password generation based on real user passwords is shown to perform better. The most popular techniques are: \textit{Rule-based dictionary} attacks, \textit{probabilistic context free grammars} (\textit{PCFG}s), \textit{Markov models}, and machine learning techniques \cite{aggarwal_new_2018}, \cite{weir_password_2009}. The likelihood of each password can be calculated if we assign a probability to each production rule. However, determining the probability distribution over such data sets is not a straightforward task and is outside the scope of this research. It is different from assessing word-list quality \cite{kanta_pcwq_2021,kanta_how_2021}, or individual password strength \cite{galbally_new_2017-1,wang_fuzzypsm_2016,oesch_that_2019,galbally_new_2017}, as both are a separate line of research. As Aggarwal et al. shows \cite{aggarwal_new_2018}, \textit{PCFG} \textit{parse trees} are usually \textit{ambiguous} which means, there can be multiple ways to produce a single word. 
Furthermore, if a capable adversary aims to determine the probability distribution of such data sets but doesn't know the original creation mechanism, they can get a completely different ranking order if for example a \textit{Markov model} is used to rank a data set created by \textit{PCFG}. Not to mention that each model is fully dependent on the original training data.
Note, that Bonneua warns against the use of traditional metrics such as \textit{Shannon entropy} and \textit{guessing entropy} for evaluating large password data sets \cite{bonneau_science_2012}.

For hiding the original hash digest among decoy hashes we will utilize a concept similar to predicate encryption. In predicate encryption \cite{gennaro_predicate_2015}, a ciphertext is associated with descriptive attribute values $\upsilon$ in addition to a plaintext $p$, and a secret key is associated with a predicate $f$. Decryption returns plaintext $p$ if and only if $f(v) = 1$. In our case there will be no secret key associated with the predicate, as hash functions cannot be reversed in the same way as an encryption function but as the concept is similar we use this terminology.

\section{The 3PC protocol} \label{3pcprotocol}

First, we review some important cryptographic primitives and their properties we rely on, which is followed by the main steps of the 3PC protocol. Next, we introduce how a predicate function can evaluate the output of a hash function. Finally, we examine how the protocol works in action by showcasing two \textit{"toy examples"} serving as proof of concept.

A hash function is a computationally efficient deterministic function mapping from an arbitrary size input (\textit{message space}) into a fixed size output of length $l$ (\textit{digest space}). 
A hash function is called a \textit{cryptographic hash function} if the usual security properties are satisfied: \textit{pre-image resistance} (or one-way property), \textit{second pre-image resistance}, and \textit{collision resistance}. We also require that a hash function exhibits the \textit{avalanche effect} and satisfies the \textit{random oracle model} where the output is uniformly distributed.
It is important to note that theoretically there are infinitely many collisions for a hash function, but these should be difficult to find, meaning, there is no explicit or efficient algorithm that can output a collision.
 \subsection{Protocol Design and Notations}
For the rest of the paper the following notation will be used:

\begin{itemize}
\item{$t$: The \textit{target hash} which the client wants to crack}

\item{$\Sigma$: the finite set of hexadecimal symbols \{0-F\}, where the finite sequence of symbols of length $l$ over the alphabet $\Sigma$ is denoted by $\Sigma^l$}

\item{$\Theta^{*}$: For any alphabet $\Theta$, the set of all strings over $\Theta$}
\item{$h$: A cryptographically secure \textit{hash function} where $\Theta^{*}$  serves as the input space for $h$,  i.e.,  $h:\Theta^*\to\Sigma^l$}
\item{$\mathcal{X_\upsilon}$: Set of \textit{decoy hashes}, serving as an anonymity set for the target hash, where $t\in\mathcal{X_\upsilon}$ }
\item{$N_\upsilon$: The desired number of \textit{decoy hashes} the client wants as the size of $\mathcal{X_\upsilon}$ }
\item{$\upsilon$: A vector that describes every hash in $\mathcal{X}_\upsilon$, and $\upsilon \in \Sigma^{2l}$}
\item{$\mathcal{DS}$: The \textit{cracking data set}, selected by the client and used by the  cracking server to recover hashes from $\mathcal{X}_\upsilon$}
\item{$\mathcal{CS}$: \textit{Candidate password set}, a set of password-hash pairs the server managed to crack from $\mathcal{X}_\upsilon$}
\end{itemize}
A high level pseudo code of the 3PC protocol can be seen in Algorithm \ref{3pcalgo}. 
Our two parties are the client and the server, as indicated below in Figure \ref{example}. The client wants to crack the \textit{target hash}, and recover the cleartext password. First, the client checks using REQ-H, if the server side architecture can efficiently crack the desired hash function. The server responds with ACK-H, specifying the hash cracking speed in hash-rate/secundum (H/Sec). 
Based on this information, using DEF-DSR the client selects a \textit{data set} $\mathcal{DS}$, and specifies the desired number of \textit{candidate passwords} which is denoted by $r$. Meaning, the \textit{candidate set} $\mathcal{CS}$ should contain approximately $r$ elements. 
Next, the client creates a vector $\upsilon$ that defines the set of decoy hashes, such that $|\mathcal{X}_\upsilon| \approx N_\upsilon$ and $t \in \mathcal{X}_\upsilon$ is satisfied (CLC-NV, GEN-V). 
We note, that creating a vector $\upsilon$ which defines exactly $N_\upsilon$ elements is linked to the theory of $y$-smooth numbers. As a consequence $N_\upsilon$ can only be approximated when $\upsilon$ is created.
The \textit{cracking data set} which can be a selection of dictionary words with mangling rules, brute force rules, etc., and the vector $\upsilon$ is transferred to the server (SND-P). 
The server starts hashing every password in $\mathcal{DS}$ (CRK-XV). If a resulting hash digest is in the set defined by the vector $\upsilon$ (this is checked by the predicate function, not by a direct comparison), this hash digest along with the corresponding \textit{cleartext password} is added to the list of \textit{candidate passwords}. After exhausting the search space the server sends $\mathcal{CS}$ back to the client (SND-CS).
\begin{figure*}[ht]
\centering
\includegraphics[width=0.7\textwidth]{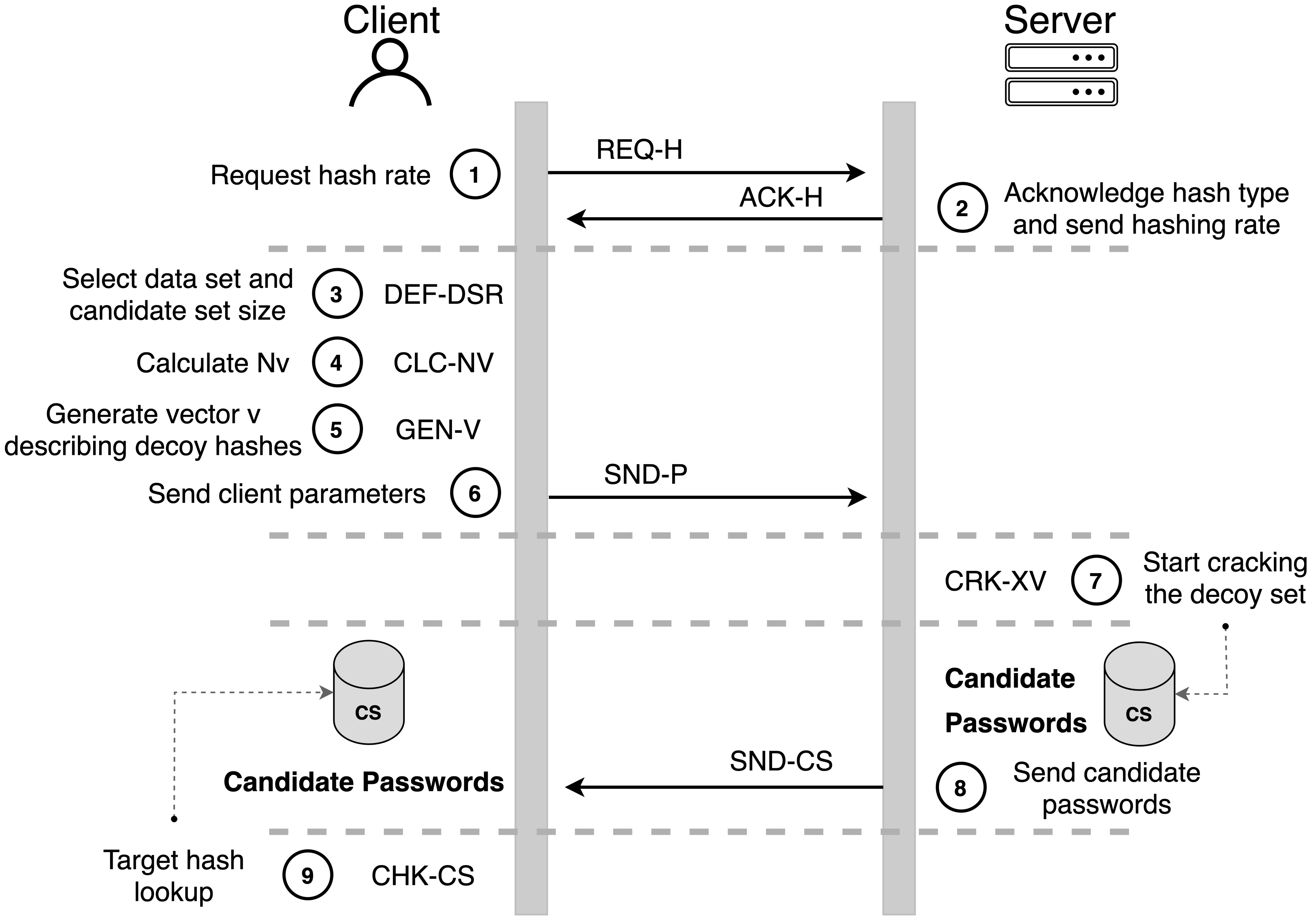}
\caption{Client-Server interaction. A simplified version on how information exchange is made between the client and the third party server}
\label{example}
\end{figure*}
To evaluate if the cracking was successful, the client simply checks the \textit{candidate list} to see if $(s,t) \in \mathcal{CS}$ for some $s \in \mathcal{DS}$ (CHK-CS). 

\noindent Essentially, the server tries to crack every password in the \textit{decoy set}, never knowing which is the \textit{target hash}. We want to underline, that the server at no point has any information on whether $t$ is cracked or remains uncracked, which is of utmost importance. This is due to the protocol design, where it is close to impossible to crack all hashes in the \textit{decoy set}.

\begin{algorithm}
  \caption{3PC protocol Client-Server exchange}\label{p3pcrack}
  \begin{algorithmic}[1]
    \Procedure{3PC}{$t,r, \mathcal{DS}$}
    \State REQ-H($h$)  \Comment{Request hash information}
    \State $i \gets$ ACK-H($h$)  \Comment{Acknowledge hash info request   }
    \State $\mathcal{DS},r \gets$ DEF-DSR($i$)  \Comment{Define data set and $r$}    
    
    \State $N_\upsilon \gets$ CLC-NV($r,|\mathcal{DS}|$) \Comment{Calculate $N_\upsilon$}
     \State $\upsilon \gets$ GEN-V($t,N_\upsilon$) \Comment{Generate decoy hashes}
     \State SND-P($\mathcal{DS},\upsilon$) \Comment{Send  parameters to server}

     \State $\mathcal{CS} \gets$ CRK-XV($\upsilon,\mathcal{DS}$) \Comment{Cracking decoy set}
      \State SND-CS($\mathcal{CS}$) \Comment{Send candidate passwords}
      \State CHK-CS($\mathcal{CS},t$) \Comment{Check target hash in candidate set}
    
\EndProcedure
  \end{algorithmic}
   \label{3pcalgo}
\end{algorithm}

The decoy hash set will serve as an anonymity set, providing 
\textit{k-anonymity} for the \textit{target hash digest}, and the server can only have a $1/|\mathcal{X}_\upsilon|$ chance to guess $t$ when observing $\mathcal{X}_\upsilon$ alone, which in practice can easily be around $10^{-70}$. 
The decoy hashes at the same time as protecting $t$, can ensure the privacy protection of the \textit{pre-image} of $t$, regardless of whether it is cracked or not. This is a really important claim. As the server recovers many plausible cleartext passwords these will serve as an anonymity set for the \textit{pre-image} of $t$. 
Using the \textit{security parameters} $\mathcal{DS}, N_\upsilon$ and $r$, one can calculate and adjust the probabilistic knowledge of the server about both the \textit{target hash}, and the \textit{cleartext password} that created the hash. This knowledge can be tailored based on the calculated privacy needs of the client, or the resources of the third party.

Privacy demands could require ever larger sets of decoy hashes, where transferring $|\mathcal{X}_\upsilon|$ unique hashes would be a serious bottleneck.
In addition to the privacy challenges, this problem is also solved by the 3PC protocol, as the decoy set is transferred in a compact form described by the vector $\upsilon$. This, at the same time allows the \textit{predicate function} to remove limitations on the amount of hashes we can efficiently handle. The client or the server side never needs to write the entire $\mathcal{X}_\upsilon$ set to disk, which could be a bottleneck, as in realistic scenarios it can contain $10^{70}$ unique hash digests or more. This in itself is an important contribution, allowing the CRK-XV process to finish regardless of the number of decoy hashes in $\mathcal{X}_\upsilon$ as we will show in Lemma \ref{ordo1}. 
\subsection{Establishing the Security Parameters}

In the following, we examine how the client can predictably control the number of hashes that are cracked from the \textit{decoy set} $\mathcal{X}_\upsilon$. Since the \textit{pre-images} of the decoy hashes are fully random, this "cracking success rate" will be determined only by the size of the \textit{cracking data set} $|\mathcal{DS}|$ and $|\mathcal{X}_\upsilon|$. What passwords are present in the \textit{data set} will have no influence on this calculation. To understand this, we need to look at how password hashes are distributed over the \textit{co-domain} of $h$.
When elements of $\mathcal{DS}$ are hashed, the output is spread randomly over $\Sigma^l$ as shown in Figure \ref{hash2}. 
This means that by picking a \textit{hash digest} randomly from the output space of $h$, the probability of selecting one that has a \textit{pre-image} from $\mathcal{DS}$ is simply $|\mathcal{DS}|/|\Sigma^l|$. As there is no correlation between the \textit{cleartexts} and the \textit{hashes}, by observing only the output space, selecting any of the hash digests carries a $|\mathcal{DS}|/|\Sigma^l|$ chance to have a \textit{pre-image} in $\mathcal{DS}$. By picking $k$ hash digests, the expected number of them having a \textit{pre-image} in $\mathcal{DS}$ is $k$ times $|\mathcal{DS}|/|\Sigma^l|$. 

This expected number is the same whether $k$ hashes are picked from $\mathcal{X}_\upsilon$, or by selecting $k$ randomly from $\Sigma^l$.
Thus we can choose from $\mathcal{X}_\upsilon$, the set described by $\upsilon$, where all hashes by definition satisfy the predicate function, which is introduced in Section \ref{ysmooth}.
Now, we can calculate the expected number of \textit{candidate passwords} ($r$) we get from a given $\mathcal{X}_\upsilon$ when cracking with a \textit{data set} with size $\mathcal{|DS|}$:
\begin{equation}
N_\upsilon \frac{|\mathcal{DS}|}{|\Sigma^l|} \approx r  
\label{equation2}
\end{equation}
This formula is used by the client in CLC-NV and GEN-V, to determine $N_\upsilon$, and then to check if vector $\upsilon$ can ensure approximately $r$ candidate passwords in $\mathcal{CS}$. 
If we want to have more \textit{decoy hashes} in $\mathcal{X}_\upsilon$ which upon getting cracked will reveal a \textit{pre-image} in $\mathcal{DS}$, we simply need to increase $|\mathcal{X}_\upsilon|$. 
\begin{figure}[ht]
\centering
\includegraphics[width=0.4\textwidth]{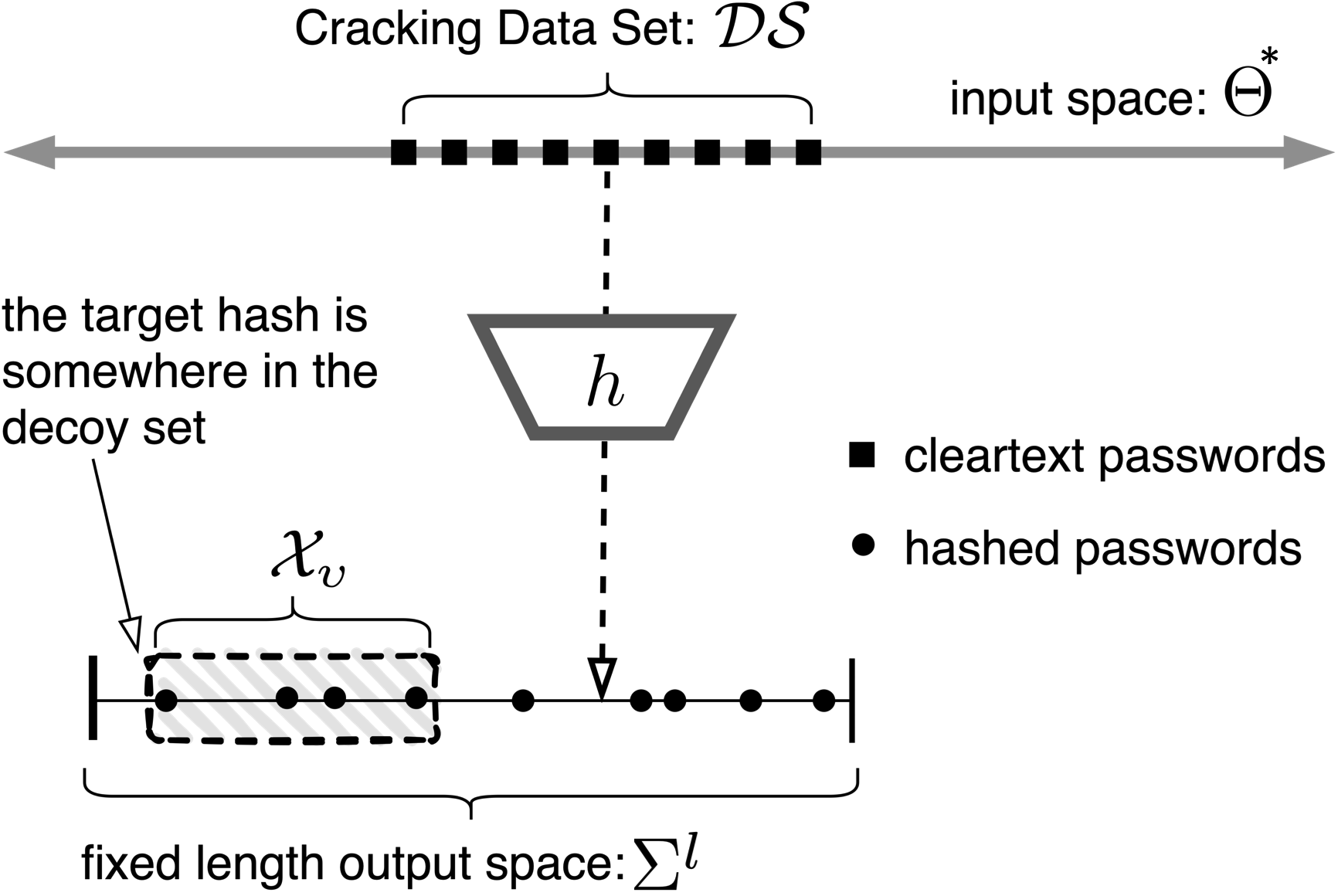}
\caption{Hashing the cracking data set and selecting an anonymity set $\mathcal{X}_\upsilon$ for a target hash $t$}
\label{hash2}
\end{figure}

Visually, we can see this from in Figure \ref{hash2}, that as we expand the $\mathcal{X}_\upsilon$ set, it will "gobble up" more of the dots representing the hashed $\mathcal{DS}$ words. 
The \textit{target hash} is only cracked if the \textit{cracking data set} used contains a \textit{pre-image} for $t$. Increasing $\mathcal{X}_\upsilon$ will achieve two things: it produces more candidate passwords for a given \textit{data set} selected, and it decreases the probability of a correct guess on the \textit{target hash digest}.
Due to the random distribution of the hash function the security parameter $r$ is just an expected value. Although we can reliably estimate it, we will not know the exact number of candidates that will be returned in the \textit{candidate set} until the end of the cracking process. Also, the \textit{server} will never know if a certain \textit{data set} has found the \textit{target hash}, or it is still uncracked.

\subsection{Mathematical foundations of 3PC}\label{ysmooth}
An important contribution of this paper which we next introduce, is the extension of the concept of predicate encryption to one way hash functions. Following that, we present the underlying mathematical problem that needs to be solved during the step GEN-V.

\begin{definition}[\textbf{Predicate Function}]

The set of all strings of length 
$l$ over the alphabet $\Sigma$ is denoted by $\Sigma ^{l}$, where our alphabet will be the set of hexadecimal characters. 
For a vector $\upsilon=(\upsilon_1,\dots,\upsilon_{2l}) \in \Sigma^{2l}$ we define a predicate function $P_\upsilon$ over $\Sigma ^{l}$ as follows: Given an input vector $x=(x_1,\dots,x_l) \in \Sigma^l$, $P_\upsilon(x)=1$ if $(\upsilon_{2i-1} \leq x_i \leq \upsilon_{2i})$  for all  $1 \leq i \leq l $, $P_\upsilon(x)=0$ otherwise.
\end{definition}

\noindent For example if $x=(3,C) \in \Sigma^{2}$ and $v=(2,5,C,D) \in \Sigma^{4}$, we have $P_\upsilon(x)=1$ since $2\leq x_1\leq 5$ and $C\leq x_2\leq D$ are satisfied for our vector $\upsilon$. 
For a given $\upsilon \in \Sigma^{2l}$, let $\mathcal{X_\upsilon}$ be the set of all $x$ vectors which satisfy $P_\upsilon(x)=1$. The number of elements (cardinality) of $\mathcal{X_\upsilon}$ is denoted by  $|\mathcal{X_\upsilon}|$. The size of $\mathcal{X_\upsilon}$ can be calculated from  the definition of $\upsilon$ by the following formula:
\begin{equation} 
\label{sizeofXv}
 |\mathcal{X_\upsilon}|=\prod\limits_{i=1}^{l} max\{\upsilon_{2i}-\upsilon_{2i-1}+1,0\}
\end{equation}

\noindent Clearly, we have a set of  $|\mathcal{X_\upsilon}|$ different vectors making up $\mathcal{X_\upsilon}$, i.e.,

\[ \mathcal{X_\upsilon} = \left[ {\begin{array}{ccccccccccccc}
x^1=(x^1_1,\dots,x^1_l) \\
x^2=(x^2_1,\dots,x^2_l) \\
 \vdots \\ 
x^{|\mathcal{X_\upsilon}|}=(x^{|\mathcal{X_\upsilon}|}_1,\dots,x^{|\mathcal{X_\upsilon}|}_l) \\
\end{array} } \right] \]

\noindent Note, that $|\mathcal{X_\upsilon}|=1$ iff $\{ \upsilon_{2i}=\upsilon_{2i-1}: \forall \,1 \leq i \leq l \}$ and  $|\mathcal{X_\upsilon}|=0$ iff $\exists \,i$ such that  $\upsilon_{2i-1}>\upsilon_{2i}$.
Similarly, if for all $\,1 \leq i \leq l $, $\upsilon_{2i}$ is the greatest element of $\Sigma$ and $\upsilon_{2i-1}$ is the least element of $\Sigma$, then $|\mathcal{X_\upsilon}|=|\Sigma^l|$. 
\begin{lemma} 
The calculation of $P_\upsilon(x)$ is independent of the cardinality of $\mathcal{X}_\upsilon$.
\label{ordo1}
\end{lemma}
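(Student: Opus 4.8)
The plan is to argue directly from the definition of the predicate function, showing that its evaluation cost is governed solely by the fixed digest length $l$ and never references the set $\mathcal{X}_\upsilon$ explicitly. The central observation is that $P_\upsilon(x)$ is defined coordinate-wise: to decide whether $P_\upsilon(x)=1$ one checks, for each index $1 \leq i \leq l$, the single interval-membership condition $\upsilon_{2i-1} \leq x_i \leq \upsilon_{2i}$. Each such check is a pair of comparisons on single hexadecimal symbols, so evaluating $P_\upsilon(x)$ amounts to at most $2l$ elementary comparisons together with an early-exit conjunction. This count depends only on $l$, the fixed output length of $h$.

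First I would make explicit that $l$ and $|\mathcal{X}_\upsilon|$ are decoupled parameters. The length $l$ is fixed by the chosen hash function $h:\Theta^{*}\to\Sigma^l$, whereas by Equation \ref{sizeofXv} the cardinality $|\mathcal{X_\upsilon}|=\prod_{i=1}^{l}\max\{\upsilon_{2i}-\upsilon_{2i-1}+1,0\}$ is controlled entirely by the interval widths encoded in $\upsilon$. One can therefore hold $l$ fixed and let $|\mathcal{X_\upsilon}|$ range from $0$ up to $|\Sigma^l|$ purely by adjusting those widths, without ever altering the number of coordinate checks performed by $P_\upsilon$.

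Next I would contrast this with the naive membership test, which would decide $x\in\mathcal{X_\upsilon}$ by comparing $x$ against each of the $|\mathcal{X_\upsilon}|$ enumerated vectors $x^1,\dots,x^{|\mathcal{X_\upsilon}|}$ listed in the display above; that procedure costs $O(|\mathcal{X_\upsilon}|)$ and is precisely the bottleneck the predicate formulation is designed to avoid. Since $P_\upsilon(x)$ reaches the same decision using only the $l$ interval conditions, its running time is bounded by a function of $l$ alone and is thus constant with respect to $|\mathcal{X_\upsilon}|$. This is what justifies the earlier claim that CRK-XV terminates regardless of the size of the decoy set, even when $|\mathcal{X_\upsilon}|\approx 10^{70}$.

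There is no serious technical obstacle here; the content of the lemma is essentially the unpacking of the definition. The only point that genuinely requires care is to fix the computational model so that the phrase \emph{independent of the cardinality} has an unambiguous meaning: I would state it as the number of elementary comparisons being a function of $l$ only. Once the cost model is pinned down, the bound $2l$ follows immediately from the coordinate-wise definition of $P_\upsilon$, and the decoupling argument via Equation \ref{sizeofXv} shows this bound holds uniformly as $|\mathcal{X_\upsilon}|$ varies.
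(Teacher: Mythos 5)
Your proof is correct and follows the same approach as the paper: the paper's proof likewise observes that $P_\upsilon(x)$ is evaluated by the $l$ coordinate-wise interval checks, costing at most $2\cdot l$ comparisons, which depends only on the fixed digest length and not on $|\mathcal{X}_\upsilon|$. Your additional remarks on the decoupling of $l$ from the cardinality and the contrast with naive enumeration are consistent elaborations of that same argument.
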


\begin{proof}
From condition $\{\upsilon_{2i-1} \leq x_i \leq \upsilon_{2i}: 1 \leq i \leq l\}$, $P_\upsilon(x)$ can be calculated in constant time which in the worst case is $2\cdot l$ comparison.

\end{proof}
We say an integer $N$ is $y$-smooth if $N$ has no prime divisors greater than $y$.
\begin{lemma} \label{smoothlemma}
$|\mathcal{X_\upsilon}|$ is always a $13$-smooth number.
\end{lemma}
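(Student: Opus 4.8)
The plan is to reduce the $13$-smoothness of the product $|\mathcal{X}_\upsilon|$ to the $13$-smoothness of its individual factors. By Equation~(\ref{sizeofXv}), $|\mathcal{X}_\upsilon|$ is the product over $1 \leq i \leq l$ of the terms $m_i := \max\{\upsilon_{2i}-\upsilon_{2i-1}+1,\,0\}$. First I would observe that, since each $\upsilon_j$ is a hexadecimal symbol identified with an integer in $\{0,1,\dots,15\}$, every factor $m_i$ is an integer lying in $\{0,1,\dots,16\}$; the upper bound $16$ is attained exactly when $\upsilon_{2i-1}$ is the least symbol $0$ and $\upsilon_{2i}$ is the greatest symbol $F$.

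Next I would establish the key arithmetic fact: every integer $m$ with $1 \leq m \leq 16$ is $13$-smooth. Any prime divisor $p$ of such an $m$ satisfies $p \leq m \leq 16$, and the largest prime not exceeding $16$ is $13$ (the primes up to $16$ being $2,3,5,7,11,13$, while the next prime $17$ already exceeds $16$). Hence $p \leq 13$, so $m$ has no prime factor larger than $13$. This is the crux of the argument, and it also explains why the constant in the lemma is precisely $13$: the bound is dictated by the size of the hexadecimal alphabet, which caps each factor at $16$.

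Finally I would combine these observations using the fact that a product of $13$-smooth numbers is again $13$-smooth, since the set of primes occurring in the factorization of the product is the union of the primes occurring in the factors, all of which are at most $13$. Setting aside the degenerate case in which some $m_i = 0$ (where $\mathcal{X}_\upsilon$ is empty and the statement is vacuous), every $m_i$ lies in $\{1,\dots,16\}$ and is therefore $13$-smooth, so their product $|\mathcal{X}_\upsilon|$ is $13$-smooth as well.

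I do not expect a genuine obstacle here: once one notices that each factor is bounded by $16 = |\Sigma|$, the conclusion follows immediately from the placement of the primes below $16$. The only subtlety worth flagging is the boundary convention for $|\mathcal{X}_\upsilon| = 0$, which I would dispose of explicitly by restricting attention to the meaningful case of a nonempty decoy set.
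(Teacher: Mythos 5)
Your proposal is correct and follows essentially the same route as the paper's proof: bound each factor of the product in Equation~(\ref{sizeofXv}) by $|\Sigma|=16$ and note that no prime larger than $13$ can divide an integer of that size. You spell out the steps (the per-factor bound, the list of primes below $16$, closure of smoothness under products, and the degenerate $|\mathcal{X}_\upsilon|=0$ case) more carefully than the paper does, but the underlying argument is identical.
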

\begin{proof}

For any $\upsilon=(\upsilon_1,\dots,\upsilon_{2l}) \in \Sigma^{2l}$ vector it is trivial that $0 \leq \upsilon_i \leq |\Sigma|$ for all $1 \leq i \leq l$. From equation (\ref{sizeofXv}), the prime factors of $|\mathcal{X_\upsilon}|$ are always less than $|\Sigma|=16$, from which the lemma follows.
\end{proof}
\begin{corollary} \label{lemma2}
The prime factorization of every $|\mathcal{X_\upsilon}|$ number is of the form 
$2^A\times 3^B \times 5^C \times 7^D \times 11^E \times 13^F$, where ${A,\dots, F}\in \mathbb{N}.$
\end{corollary}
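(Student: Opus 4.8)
The plan is to derive this directly from Lemma \ref{smoothlemma}, which already establishes that $|\mathcal{X_\upsilon}|$ is $13$-smooth. Recall that $13$-smoothness means, by definition, that no prime divisor of $|\mathcal{X_\upsilon}|$ exceeds $13$. The first step is therefore to enumerate the complete list of primes not exceeding $13$: these are exactly $2, 3, 5, 7, 11$ and $13$ — six primes in total — and the next prime, $17$, already exceeds $|\Sigma| = 16$, so no further prime can occur as a factor.

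The second step is to invoke the fundamental theorem of arithmetic: every integer $|\mathcal{X_\upsilon}| \geq 1$ admits a unique factorization into prime powers. Since the only admissible prime bases are those listed above, this factorization must collapse to the stated form $2^A \times 3^B \times 5^C \times 7^D \times 11^E \times 13^F$. A prime that does not actually divide $|\mathcal{X_\upsilon}|$ simply receives exponent $0$, which is precisely why each of $A, \dots, F$ is allowed to range over $\mathbb{N}$ rather than over the positive integers; in particular the case $|\mathcal{X_\upsilon}| = 1$ (all $\upsilon_{2i} = \upsilon_{2i-1}$) corresponds to $A = \cdots = F = 0$.

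There is essentially no obstacle here, as the statement is a bookkeeping consequence of the smoothness bound rather than a new mathematical fact; the only point requiring care is the degenerate case $|\mathcal{X_\upsilon}| = 0$, arising when $\upsilon_{2i-1} > \upsilon_{2i}$ for some $i$ (see the remark following equation (\ref{sizeofXv})). Since $0$ has no prime factorization, I would state the corollary for valid vectors $\upsilon$, i.e.\ those with $\upsilon_{2i-1} \leq \upsilon_{2i}$ for all $i$, so that $|\mathcal{X_\upsilon}| \geq 1$ and the factorization is well defined.
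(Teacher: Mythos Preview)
Your proposal is correct and takes essentially the same approach as the paper, which states the corollary without proof as an immediate consequence of Lemma~\ref{smoothlemma}. Your additional remark on the degenerate case $|\mathcal{X_\upsilon}| = 0$ is a nice touch that the paper itself glosses over.
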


During step CLC-NV, an appropriate $N_\upsilon$ is calculated from the size of the \textit{data set} $|\mathcal{DS}|$, and the desired number of \textit{candidate passwords} $(r)$ using equation \ref{equation2},

\begin{equation}
    N_\upsilon = \frac{r |\Sigma^l|}{|\mathcal{DS}|} 
    \label{NVformula}
\end{equation}
In the ideal scenario, the client can create a vector $\upsilon$ that defines the set of decoy hashes $\mathcal{X}_\upsilon$ with exactly $ N_\upsilon$ elements, i.e., $|\mathcal{X}_\upsilon| =||N_\upsilon||$ where $||\cdot||$ denotes the nearest integer. However, $N_\upsilon$ is not necessarily $13$-smooth, so we need to find a $13$-smooth number close to the original $N_\upsilon$. We can take the logarithm of both sides of the equation $|\mathcal{X}_\upsilon| =||N_\upsilon||$, i.e.,  
\begin{equation*}
    \log(2^A\times 3^B \times 5^C \times 7^D \times 11^E \times 13^F)=\log(N_\upsilon)
\end{equation*}
from which we get
\begin{equation}
\begin{array}{cc}
       A\log(2)+B\log(3)+C\log(5)+ &\\
      D\log(7)+E\log(11)+F\log(13) & \\
      - \log(N_\upsilon)=0
\end{array}    
\end{equation}
This is a $7$-variable \textit{integer relation problem}, and can be viewed as a special subset sum optimization problem. Using the Lenstra, Lenstra and Lov\'asz (LLL)  basis reduction algorithm \cite{lenstra_factoring_1982}, or the $PSLQ$ algorithm, one can find the fitting integers. We note, that not all $13$-smooth numbers are suitable for our needs, therefore we need to run $LLL$ for different inputs with varying beta reduction parameter. For example, although $N_\upsilon=5^{l+1}$ is $13$-smooth, it is not possible to divide the factors into $l$ slots, where $\{ \upsilon_{2i}-\upsilon_{2i-1}\leq16: \forall \,1 \leq i \leq l \}$ stays true. Modern $LLL$ implementations can solve integer relation problems with more than 500 variables. In our case we can always get an appropriate result in polynomial time (in a few seconds on an average computer).

As previously stated, our objective is to crack the \textit{target hash} and hide it in a \textit{decoy set} represented by $\upsilon$. As $P_\upsilon(t)=1$ must be satisfied we construct $\upsilon$ from the \textit{target hash}. Depending on how we adjust the degree of freedom in vector $\upsilon$, it can allow $|\mathcal{X}_\upsilon|-1$ decoy hashes to satisfy the function $P_\upsilon$. These vectors will serve as decoy hashes.

Now we possess the minimum knowledge to understand the protocol. First, we present 3PC through two toy examples, before moving on with the security analysis.

\subsection{Toy Examples}

\textbf{Toy Example 1 - Dictionary attack:} The \textit{Rock-You} database contains $14\,344\,391$ unique passwords which will be  our $\mathcal{DS}$ for this example \cite{tihanyi_unrevealed_2015}. The target hash is $t=(C,6,B,F,A,B,A,2)\in\Sigma^8$, where $t$ is the \textit{CRC-32} output of our unknown password.
CRC-32 is not a cryptographically secure hash function however, it's presentable output size is more suitable for a toy example. After requesting hashing information (REQ-H) and getting acknowledgement (ACK-H) from the server, the client is looking to create a vector $\upsilon$ that defines an $\mathcal{X_\upsilon}$ set, such that after the cracking process approximately 20 candidate passwords are returned in $\mathcal{CS}$. 
Knowing the size of the dictionary $|\mathcal{DS}|=14\,344\,391$, and the security parameter $r=20$, we can calculate $N_\upsilon$.
This means that after $14\,344\,391$ hash calculations the \textit{client} expects approximately 20 candidates from the \textit{Rock-You} database to fall into $\mathcal{X_\upsilon}$. To satisfy this requirement, step CLC-NV uses formula \ref{equation2} to calculate $N_\upsilon$ (the expected size of $\mathcal{X}_\upsilon$),
\begin{equation}
    N_\upsilon = \frac{r |\Sigma^l|}{|\mathcal{DS}|}  = \frac{}{} \frac{20 \cdot 16^8}{14\,344\,391}
\end{equation}

The result is $N_\upsilon \approx 5988.36$. 
Having calculated $N_\upsilon$, we need to construct a vector $\upsilon$ where $P_\upsilon(t)=1$ and $|\mathcal{X}_\upsilon|$ is around this size. Note, that $5988$ is not 13-smooth as $5988=2^2 \cdot 3 \cdot 499$. Using the GEN-V algorithm a suitable $\upsilon$ can be selected: $(C,F,2,6,A,B,D,F,9,F,B,B,A,A,0,6) \in \Sigma^{16}$. This defines an $\mathcal{X_\upsilon}$ decoy set with $5880$ elements, which satisfies Lemma \ref{smoothlemma}, Corollary \ref{lemma2}, and is close to $N_\upsilon$. Moreover, the degrees of freedom can be divided into $l$ slots, while satisfying $\upsilon_{2i}-\upsilon_{2i-1}\leq16$, $\forall \,1 \leq i \leq 8$. As such, it will produce nearly the same number of expected candidates:

\begin{equation}
|\mathcal{X}_\upsilon| \frac{|\mathcal{DS}|}{|\Sigma^l|} =  5880\frac{14\,344\,391}{ 16^8} \approx 19.63
\end{equation}

After receiving SND-P the server starts hashing every element in the dictionary, and it looks for matches in the set $\mathcal{X_\upsilon}$. Here lies one of the major benefits of the protocol: the server never has to make $|\mathcal{X}_\upsilon|$ comparisons for each password. After calculating the hash, the server simply checks if $P_\upsilon(h(s))=1$, $\forall s \in \mathcal{DS}$. In this toy example we expect around 20 passwords from $\mathcal{DS}$ to satisfy $P_\upsilon$. 
The \textit{hash-cleartext} pairs where $P_\upsilon(h(s))=1$ can be seen in Table \ref{toyexample1}.   
\begin{table}[H]
\caption{Toy Example 1: Generated candidate hashes }
\centering
\begin{tabular}{|c|c|c|c|c|c|} 
\hline
\textbf{\#} & \textbf{password} & \textbf{CRC-32} & \textbf{\#} & \textbf{password}                     & \textbf{CRC-32}                     \\ 
\hline
1           & tangan        & \texttt{C5AEFBA5}        & 11           & 0849831211       & \texttt{D4BFDBA6}                            \\ 
\hline
2           & hornbyneho        & \texttt{C3AEFBA0}        & 12           & lumpibuniz       & \texttt{E3ADEBA4}                            \\ 
\hline
3           & 28707adnen    & \texttt{C4AE9BA4}        & 13           & sweep21     & \texttt{E3BEEBA6}                            \\ 
\hline
4           & lisa1842         & \texttt{C4BECBA2}        &14            & 577672   & \texttt{E3BEFBA5}                            \\ 
\hline
5           & \textbf{0BChrist} & \textbf{\texttt{C6BFABA2}}& 15          & 050462654 &\texttt{E5BDBBA1}                             \\ 
\hline
6          & sapphire24          & \texttt{C6BFDBA2}        & 16      & horses33 & \texttt{F2BEBBA0} \\ 

\hline
7          & Kissarmy1!           & \texttt{D2ADFBA6}        & 17          & zuzuloka   & \texttt{F3AECBA1}                            \\ 
\hline
8          & whateva89          & \texttt{D2BE9BA3}        & 18          & ms.jackson2008      & \texttt{F3BEDBA5}                            \\ 
\hline
9          & keno333$\_$          & \texttt{D3BDABA3}        & 19          & a2gfamilymaster  & \texttt{F5BDDBA5}                           \\ 
\hline
10          & bighottie          & \texttt{D4AEABA2}        & 20          & alana123456789    & \texttt{F6ADABA2}                            \\            
\hline
\end{tabular}
\label{toyexample1}

\end{table}

From the returned candidate set the client can see that the \textit{target hash} is cracked and the password is \texttt{"0BChrist"}. 
The server was expecting around 20 candidates from the dictionary, but has no idea if the cracking was successful.   
It is also important to ask the question if the passwords are equally likely in this case.
This is thoroughly examined in Section\ref{probfoul}. For the sake of this example we selected a password from \textit{Rock-You} to begin with, hence it was present in the candidate set.

\textbf{Toy Example 2 - Brute force:}
A more realistic case is to use a cryptographically secure hash function. In this scenario the \textit{client} has the extra knowledge, that the hash digest hides an \textit{8 digit PIN code}. The client must consider, that leaking such information could significantly improve the guessing ability of the third party as we will discuss in Section \ref{probfoul}. 

Let  $t=(B,2,3,B, \cdots,  A,7,9,3) \in \Sigma^{64}$ represent the \textit{SHA}-\textit{256} output of the following PIN code: \texttt{"43256891"}, where the \textit{cleartext} is not known by the \textit{client}. There are $10^8$ different 8 digit number codes which will be our $\mathcal{DS}$.
The client determines $r=10$ to be the expected number of \textit{candidate passwords} (which would be once again, an insufficient amount in a realistic scenario).
To find a vector $\upsilon$ that satisfies this the client first performs step CLC-NV: 
\begin{equation*}
    N_\upsilon = \frac{r |\Sigma^l|}{|\mathcal{DS}|}  = \frac{}{} \frac{10 \cdot 16^{64}}{10^8}
\end{equation*}

The number of \textit{decoy hashes} should be $N_\upsilon \approx 1.158*10^{70}$. By using the GEN-V algorithm we can generate a suitable $\upsilon=(\upsilon_1,\dots,\upsilon_{128}) \in \Sigma^{128}$ vector, where $P_\upsilon(t)=1$ is satisfied and $|\mathcal{X}_\upsilon|$ is near $N_\upsilon$. A suitable vector can be $\upsilon=$\texttt{\seqsplit{[7c27385c3f3f3f3f3f0c3f3f3f3f3f0c3f0c3f3f0c3f3f3f3f3f0c0c3f0c0c3f3f3f3f0c3f0c0c0c0c3f0c0c3f3f0c0c0c3f0c3f0c0c3f0c0c3f0c0c3f0c0c3f]}} $ \in \Sigma^{128}$.

After the server calculates the SHA-256 hashes for all the $10^8$ different number codes, 9 different PINs are cracked from the set $\mathcal{X_\upsilon}$. By observing the candidates the \textit{cracking server} still cannot know which is the password or if it is cracked at all. The cracked \textit{candidate passwords} can be seen in Table \ref{sha256}:

\begin{table}[h] 
\caption{Toy Example 2: Generated candidate hashes }
\centering
\begin{tabular}{|c|c|c|c|c|c|} 
\hline
\textbf{\#} & \textbf{password} & \textbf{SHA-256}  \\ 
\hline
1           & 15851680            & \texttt{85869d73ebe4c562cbde168898669053\dots}                  \\ 
\hline
2           & 18662804            & \texttt{a58bbf75d9cad8fc764cb3f364823a3b\dots}          \\ 
\hline
3           & 28251765            & \texttt{b26c78a4916d348565d986d4a6926034\dots}    \\ 
\hline
4           & 36823110            & \texttt{b27ccbfa99fc96dc38a445accd40d148\dots}    \\ 
\hline
5           & 37012370            & \texttt{945ab8ad984bfcb38b4f36cc36b73cae\dots}    \\ 
\hline
6           & \textbf{43256891}   & \texttt{b23be566408ad8d2f1ac0d84330c3127\dots}    \\ 
\hline
7           & 56995169            &  \texttt{7366edc5bc43387536ba6f47ad2ac834\dots}  \\
\hline
8           & 60409880            & \texttt{b689a9c7a5d539c8abfc197ae87a705e\dots}    \\ 
\hline
9           & 98509815            & \texttt{b3859a5f5ccfef995bd723c35598d137\dots}    \\ 
\hline
\end{tabular}
\label{sha256}
\end{table}

Selecting such a small candidate set size and exhausting the search space for a specific input can raise several concerns from a privacy point of view which we will discuss in connection with this example in Section \ref{probfoul}. 
One could argue that for this toy example it would not be necessary to use a \textit{PCaaS} third party, as \textit{SHA}-\textit{256} is a hash function is fast to compute, therefore, anyone could hash $|\mathcal{DS}|=10^8$ passwords on their laptop using \textit{John the Ripper}. This is however not the case if we change the hash function to \textit{bcrypt}. If combined with proper key stretching techniques (e.g: $2^{16}$ iterations) \textit{bcrypt} would make it infeasible to brute force even such small key-spaces on desktop computers.

\section{Security and Privacy analysis of 3PC} \label{probfoul}
This section considers the knowledge the third party cracking server possesses, or information an attacker could gain upon acquiring the vector $\upsilon$, considering different side channel attacks. Finally, we examine how a malicious attacker could try to tamper with the results and how such situations are evaded by the protocol design.

\subsection{Probabilistic knowledge without assumptions} \label{prob}
At first, we examine the best guessing strategy for the server without any assumption on the probability distribution of passwords, as they can vary from uniform (in case of machine generated) to variations of Zipf distribution (for human generated).
Since by definition $t \in \mathcal{X}_\upsilon$, even without starting the cracking process a correct guess on the \textit{target hash} can be made with $1/|\mathcal{X}_\upsilon|$ probability. \textit{What can the server at this point know about the potential cleartext?}
By design, the cracking process can only be successful if the selected \textit{data set} contains a cleartext $s^*$ such that $h(s^*)=t$.
Let $\mathcal{A}$ be the event the server guesses $s^*$ correctly, and $\mathcal{B}$ that $s^* \in \mathcal{DS}$. We can calculate the conditional probability of guessing $s^*$ right:

\begin{equation}
\mathcal{P}(\mathcal{A})=\mathcal{P}(\mathcal{A}|\mathcal{B}) \cdot \mathcal{P}(\mathcal{B})
\end{equation}
Note, that $s^* \in \mathcal{DS}$ is not an assumption but a condition. If this condition is satisfied the server is faced with the following situation. Without spending any resources, the best guess would be $1/|\mathcal{DS}|$. Note, that at this stage there is no point in examining the probability distribution over $\mathcal{DS}$. The predicate function selects every password in the \textit{data set} uniformly with exactly $|\mathcal{DS}|/|\mathcal{X}_\upsilon|$ probability, and only a random subset of $s\in\mathcal{DS}$ satisfies $P_\upsilon(h(s))=1$, in other words:

\begin{equation}
\mathcal{P} \Bigl( P_\upsilon(h(s))=1 \Bigr) =\frac{|\mathcal{DS}|}{|\mathcal{X}_\upsilon|},  \forall s\in \mathcal{DS}.
\end{equation}
The server can greatly increase it's guessing ability by sorting out all passwords where $P_\upsilon(h(s))=0$ as shown in Figure \ref{sets}.

\begin{lemma} \label{guesslemma}
If condition $\mathcal{B}$ is fulfilled and $|\mathcal{X}_\upsilon| < |\Sigma^l|$, the server must hash $\forall s\in \mathcal{DS}$ to maximize $\mathcal{P}(\mathcal{A}|\mathcal{B})$ over the given \textit{data set}.
\end{lemma}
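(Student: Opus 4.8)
The plan is to recast the server's optimal strategy as a candidate-pool-shrinking problem and to show that the expected size of the surviving pool is \emph{strictly} monotone in the number of hashes computed. First I would fix the guessing model in line with the surrounding subsection: since no assumption is made on the password distribution, the server's best final guess is uniform over whatever candidates it cannot rule out, so $\mathcal{P}(\mathcal{A}\mid\mathcal{B})$ equals the reciprocal of the effective candidate pool size. Maximizing $\mathcal{P}(\mathcal{A}\mid\mathcal{B})$ then reduces to \emph{minimizing} that pool size.

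The key structural fact I would establish first is that the true pre-image $s^*$ can never be pruned: because $\upsilon$ is constructed so that $P_\upsilon(t)=1$ and $h(s^*)=t$, we always have $P_\upsilon(h(s^*))=1$, so $s^*$ survives whether or not it is hashed. Conversely, an element $s\in\mathcal{DS}$ can be discarded precisely when hashing it reveals $P_\upsilon(h(s))=0$, since then $s\neq s^*$. Hence for any subset $S\subseteq\mathcal{DS}$ the server decides to hash, the surviving pool is exactly the hashed elements passing the predicate together with all $|\mathcal{DS}|-|S|$ un-hashed elements, which must be retained because nothing is known about them.

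I would then compute the expected pool size. Invoking the random-oracle uniformity, the probability that a single hash lands in $\mathcal{X}_\upsilon$ is $|\mathcal{X}_\upsilon|/|\Sigma^l|$ (the per-password rate underlying (\ref{equation2})), so the expected number of hashed survivors is $|S|\,|\mathcal{X}_\upsilon|/|\Sigma^l|$ and the expected pool size is
\begin{equation*}
|S|\frac{|\mathcal{X}_\upsilon|}{|\Sigma^l|}+\bigl(|\mathcal{DS}|-|S|\bigr)=|\mathcal{DS}|-|S|\left(1-\frac{|\mathcal{X}_\upsilon|}{|\Sigma^l|}\right).
\end{equation*}
The lemma's hypothesis $|\mathcal{X}_\upsilon|<|\Sigma^l|$ makes the bracketed coefficient strictly positive, so this expression is strictly decreasing in $|S|$ and is therefore minimized uniquely at $|S|=|\mathcal{DS}|$, i.e.\ when every password is hashed, which is the assertion.

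The main obstacle is not the arithmetic but justifying the two pruning facts rigorously: that an un-hashed element genuinely cannot be eliminated, and that a hashed element conveys information only through its predicate outcome. I would lean on the random-oracle model, under which there is no predictable relation between $s$ and $P_\upsilon(h(s))$ without actually evaluating $h(s)$, so the membership events carry the stated marginal and the server has no cheaper route to shrink the pool. I would also flag where the hypothesis is essential: if $|\mathcal{X}_\upsilon|=|\Sigma^l|$ the predicate accepts every hash, hashing conveys nothing, the pool stays at $|\mathcal{DS}|$, and the conclusion would fail; this is exactly why the strict inequality appears in the statement.
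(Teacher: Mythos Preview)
Your proposal is correct and follows essentially the same pool-shrinking argument as the paper: the paper writes $\mathcal{P}(\mathcal{A}\mid\mathcal{B})$ as the reciprocal of $|\mathcal{DS}|-\sum_i |P_\upsilon(h(s_i))-1|$ (the current candidate pool) and simply observes this is maximized once every $s_i$ has been hashed, which is exactly your monotone-in-$|S|$ computation with the role of the strict inequality $|\mathcal{X}_\upsilon|<|\Sigma^l|$ made explicit. Your treatment is more careful in isolating the two pruning facts and in explaining why the hypothesis is needed, but the underlying idea is identical.
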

\begin{proof}
If $|\mathcal{X}_\upsilon| < |\Sigma^l|$, then only a random subset of $s\in \mathcal{DS}$ will satisfy $P_\upsilon(h(s))=1$, others must be sorted out. During the hashing process the server's guess can be calculated as:  
\begin{equation}
    \mathcal{P}(\mathcal{A}|\mathcal{B}) = \frac{1}{|\mathcal{DS}|-\sum\limits_{i=1}^{\mathcal{|DS|}}|P_\upsilon(h(s_i))-1|}
    \label{guess}
\end{equation}
This can reach it's maximum when the predicate function has been calculated for $\forall s_i\in \mathcal{DS}$, from which the lemma follows.
\end{proof}

\begin{figure}[ht]
\centering
\includegraphics[width=0.34\textwidth]{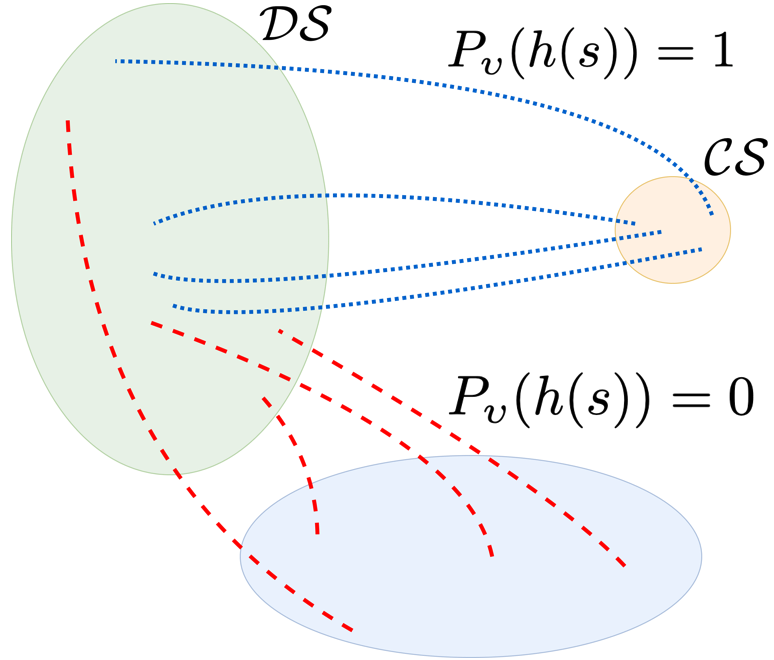}
\caption{Sorting the data set with the predicate function}
\label{sets}
\end{figure}
We mention again, that considering the probability distribution over the \textit{candidate set} should only be attempted after sorting has been performed.
From equation \ref{guess}, and Figure \ref{sets} it can be seen, that if for example the hashing process is almost complete, and only one element is left in $\mathcal{DS}$ that is not sorted, performing that last hashing operation can further decrease the size of the set we need to guess from.
Now it is obvious, that as the hashing process starts and after there are for example five candidates in $\mathcal{CS}$, the server cannot simply stop and say that now it has a $1/5$ chance to guess the cleartext. 

The next vital question to examine is what can the server know about condition $\mathcal{P}(\mathcal{B})$. In the traditional cracking scenario, it was always obvious when a certain \textit{data set} managed to crack a password, it was when a cleartext was found for the only hash digest. In our case there are always approximately $r$ cleartexts found.
One way to be certain that $\mathcal{P}(\mathcal{B})\approx 1$, is to increase the size of the \textit{data set} to $|\Sigma^l|$ unique elements. Then most passwords in $\mathcal{X}_\upsilon$ will be cracked regardless of how $N_\upsilon$ was chosen. As a consequence $\mathcal{P}(\mathcal{A})\approx1/\mathcal{CS}\approx1/N_\upsilon$.

Since it is completely impossible to hash $|\Sigma^l|$ elements with real hash functions, the server can never narrow down it's real guess to the \textit{candidate set} without making assumptions or learning implicit information. 
For example if the server thinks that the \textit{target hash} hides a human generated password, and the \textit{data set} contains a dictionary of popular words and mangling rules, it might assign a higher probability for condition $\mathcal{B}$. If the password however, is generated by a password manager then it is highly unlikely that such a \textit{data set} will recover the \textit{target hash}. 
If the server aims to increase $\mathcal{P(B)}$, even if $|\Sigma^l|$ cannot be reached, a mathematically viable approach is to increase the size of the \textit{cracking data set}. Even though this works perfectly in theory, it has little practical usefulness for the server:
After step SND-P, the number of decoy hashes $|\mathcal{X}_\upsilon|$ is fixed. In addition to the dictionary we requested ($\mathcal{DS}_1$), the server will perform further attacks with $d-1$ more \textit{data sets}.
\begin{equation}
\sum\limits_{i=1}^{d} |\mathcal{DS}_i| \frac{\mathcal{|X}_\upsilon|}{ |\Sigma^l|} \approx\sum\limits_{i=1}^{d} r_i
\label{growr}
\end{equation}
As expected for every \textit{data set}, the server will get $r$ \textit{candidate passwords}. Although this increases $\mathcal{P(B)}$, $\mathcal{P(A|B)}$ decreases as the \textit{candidate set} grows. Let
$k=|\mathcal{DS}_1 \cup \mathcal{DS}_2 \dots \cup \mathcal{DS}_d |$ where  $\mathcal{DS}_i \cap \mathcal{DS}_j  = \varnothing$ where $i \neq j$, $i,j \leq d \in \mathbb{N}^+  $  then,
\begin{equation} \label{eq11}
\lim_{\substack{k\to|\Sigma^l|}}\sum\limits_{i=1}^{d} r_i \to \mathcal{|X}_\upsilon|
\end{equation}

As a consequence of \ref{eq11}, without being able to make any assumptions, the server can not make a guess on the pre-image better than $1/|\mathcal{X}_\upsilon|$.

Using the same vector $\upsilon$ as in Toy Example 1, we can simulate how an attack looks like when further \textit{data sets} are used to crack elements in $\mathcal{X}_\upsilon$. The server now selects a different core dictionary containing $605834$ French passwords\footnote{https://github.com/clem9669/wordlists/blob/master/dictionnaire\_fr}. To this, the server adds mangling rules appending a digit, and a special character at the end of every password using the following command: \texttt{john --format=crc32 --wordlist=dictionnaire\_fr --mask=?w?d?s vectorv.txt}. Note, that in this case the vectorv.txt file contains all $5880$ hashes, as \textit{John the Ripper} doesn't support configuring  a predicate function, but with $|\mathcal{X}_\upsilon|$ being this small writing it's content to a file is not a concern. As we later see in the real life use case, writing all the \textit{decoy hashes} to a hard drive would be completely impossible. In total we expected $r=(5880\cdot605834\cdot10\cdot32)/16^8 \approx 265$ \textit{candidate passwords}.
Our cracking process returned $277$ candidates which is close to $r$. A few examples from the new candidate set can be seen in Table \ref{french}.
 
\begin{table}[H] 
\caption{Further cracking $\mathcal{X}_\upsilon$ from Example 1.}
\centering
\begin{tabular}{|c|c|c|c|} 
\hline
\textbf{\#} & \textbf{password} & \textbf{\#} & \textbf{password}\\
\hline
1&Abaigar9\^             & 190& ospedaletto3$<$                 \\ 
\hline
33& bréchaumont7$"$                        & 204& Proostdij6]        \\ 
\hline
46& Chassé1$+$ 3$+$          &207& québecoise4\*   \\ 
\hline
170& Montagne-Cherie7\}            &271& wadonville8(   \\ 
\hline
182& NorthCrawley9\&             & 277&éléphantiasis1[    \\ 
\hline
\end{tabular} 
\label{french}
\end{table}

Predicting, the password choosing behaviour of a single individual who we do not know anything about is not a straightforward task, especially if this behaviour can include using a password manager that generates random strings of unknown length. This leaves a malicious attacker with an almost impossible task when it is trying to determine $\mathcal{P(B)}$, where the only viable option is to try and learn further information to maximize $\mathcal{P(B)}$, and analyse the \textit{candidate set} in hopes to improve $\mathcal{P(A|B)}$.

\subsection{Side-channel information on the search space}
In the second toy example, if the server would somehow learn that we were after an 8 digit PIN code, it would be able to simply guess from the \textit{candidate set}. 
This is due to that in addition to $\mathcal{P(B)}=1$, an exhaustive search has been performed on $8$ digit PINs therefore, Lemma \ref{guesslemma} has been satisfied. Thus guessing the password becomes $\mathcal{P(A)}=1/\mathcal{|CS|}$. 

A possible mitigation strategy that works even if the server learns this information on $\mathcal{P(B)}$, is to increase the number of decoy hashes in $\mathcal{X}_\upsilon$ substantially. Thus, $|\mathcal{CS}|$ will grow as well, and even if the server guesses from the \textit{candidate list} it will contain too many entries. At this point we would like to emphasize again, that the 3PC protocol does not communicate such information to the server. 
If the client is concerned that the server could learn useful information by observing the \textit{data set}, the client can also change the \textit{key-space}, such that it defines a bigger search space.
Such a strategy for Toy Example 2 could be that the client asks all combinations of lower-case letters and digits ut to $8$, where $|\mathcal{DS}|=46^8$, or ask for an up to 9 character brute force with digits instead of exactly 8. Since this also raises the number of hashing operations, the decision on whether such a strategy is necessary, or feasible, depends on the security and privacy needs of the client. 

An important differentiator between the client and the server side, is that the client in some cases can possess a lot of implicit information. The client must take extra care not to leak this info through a badly constructed \textit{data set}, as it can impact the server's guess on the key space. If the password is hashed by \textit{bcrypt}, only a small \textit{data set} is selected that contains the core dictionary of words specific to the target. This can be name, street address, places lived, pet names, hobbies, favourite sports teams, etc., which is supplied with a set of mangling rules. Such information on its own can carry significant privacy risks both to the \textit{data subject}, who's \textit{PII} is leaked, and to the client side who would reveal implicit information to the server.
For the above reasons, revealing privacy sensitive information through a badly constructed data set, that would explicitly identify the entity who the \textit{target hash} belongs to, must be avoided. 
In contrast to the PIN code cracking example, where a increased \textit{data set} is beneficial, with the second example it is more than necessary to increase the \textit{data set}. 
For such data to remain unlinkable in this context, the \textit{client} must consider relevant \textit{privacy protection principles}, to ensure that the \textit{data set} serves as a sufficiently sized anonymity set of similar data \cite{sweeney_k-anonymity_2002,bayardo_data_2005,pfitzmann_anonymity_2008}.

\subsection{Server Side Guessing Based On Password Ranking}

In connection with Toy Example 2 we would like to examine if the server can achieve a $\mathcal{P(A|B)}>1/|\mathcal{CS}|$ guess by observing the candidate set. As we employed a brute-force attack, where most of the \textit{data set} consists of random strings this will provide \textit{perfect k-anonymity} if the \textit{target hash} also hides a random generated string. What if the \textit{target hash} hides a human generated password, would this \textit{candidate set} still offer sufficient privacy protection?
The security of human chosen PINs is a well researched topic \cite{markert_this_2020}, where 4 and 6 digit PINs were shown to follow a Zipf distribution \cite{wang_understanding_2017}.
Human chosen PINs are likely to follow non-random patterns such as ``12345678'', ``11111111'', a date structure like ``19930810'' or leet talk \cite{li_leet_2021}, such as ``13371337'' which reads ``leetleet''. 
If $\mathcal{X}_\upsilon$ is large these categories of \textit{pre-images} will also be represented in $\mathcal{CS}$, as the predicate function selects a random subset of $\mathcal{DS}$. However, this will not provide perfect k-anonymity as the majority of passwords in the candidate set will be a random looking string.
If an attacker tries to rank these PIN codes in the \textit{candidate list}, the \textit{target hash} can be in one of these human categories, or among the random strings where it is up to the assumptions of the server to try and determine which group is the one to pick from. 
In this example if the client wants to ensure k-anonymity even for human generated PINs, the best approach is to start with a \textit{cracking data set}, containing number codes only adhering to that format. As this is a smaller subset of all 8 digit number codes, $N_\upsilon$ can be chosen to be larger in order to provide a sufficiently sized anonymity set, where all \textit{candidate passwords} look human generated. Thus if the \textit{target hash} hides such a password it is now hidden in a set of similar data.

Referring back to Toy Example 1 where we employed a dictionary of human generated passwords as our \textit{cracking data set}, one could also argue that the returned candidates are not equally likely, as password choosing habits of individuals usually does not follow a uniform distribution over all \textit{cracking data sets} \cite{wang_birthday_2019}. 
If the \textit{data set} $\mathcal{DS}$ is a set of random equally likely strings, this would be true however, this cannot be generalised for all \textit{cracking data sets}. As discussed in connection with Toy Example 1, the \textit{Rock-You} database was shown by Wang et al. to follow a Zipf-like distribution \cite{wang_zipfs_2017}. As a consequence, employing such a cracking dictionary can result in some passwords in the \textit{candidate set} appearing more probable from the server's point of view.

As the \textit{candidate set} can only contain passwords from the given $\mathcal{DS}$ that was used for the cracking process, the ranking of passwords in the original dictionary (which the server possesses), can be applied to the \textit{candidate set} to establish an order. 
Since our $\mathcal{DS}$ in 3PC will only contain every password once, frequency ranking for example is only possible if knowledge about other frequency ranked databases are taken into account \footnote{https://github.com/danielmiessler/SecLists/blob/master/Passwords/Leaked-Databases/rockyou-withcount.txt.tar.gz}. 
Note, that Wang et al. used frequency ranking as a stepping stone from which the distribution was concluded \cite{wang_zipfs_2017}.
As we discussed in Section \ref{literature}, there are many possible strategies to rank passwords, where each can produce a different result based on previous knowledge, training data, and assumptions it is built upon. Without implicit knowledge on the use case, ranking could completely impair the server's guessing ability, where even choosing at random could yield better results. 
\begin{figure*}[ht] \label{threesets}
\centering
\includegraphics[width=0.85\textwidth]{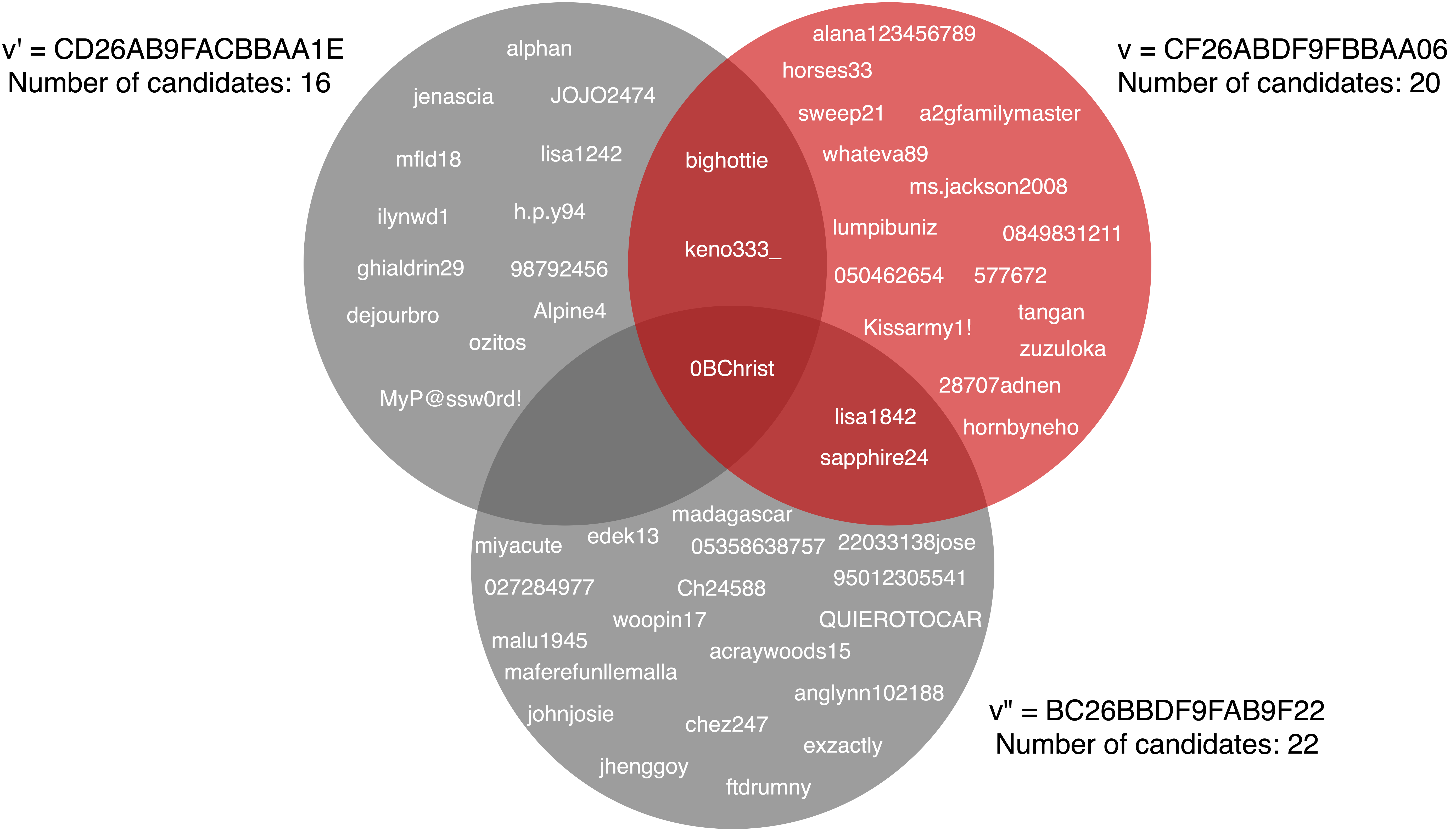}
\caption{An example on how a different vector $\upsilon$ defines a different random subset of $\mathcal{DS}$}
\label{examplev}
\end{figure*}

Ranking the passwords in $\mathcal{DS}$, and trying to guess the \textit{target hash} based on the likelihood of the returned cleartext passwords in $\mathcal{CS}$, will not improve the guessing ability of the server.
As we assume that the output of a hash function is uniformly distributed, the decoy passwords in $\mathcal{CS}$ are equally likely to be selected from $\mathcal{DS}$, regardless of what rank or likelihood is assigned to them. As such, the \textit{candidate passwords} are essentially a randomized subset of the cracking set $\mathcal{DS}$. Which randomized subset is picked, depends on how we selected our $\upsilon$. To put it differently, the vector $\upsilon$ will pre-determine which hashed \textit{pre-images} can satisfy the predicate function from $\mathcal{DS}$. However, this can never be known in advance without performing the hashing operation for every cleartext in $\mathcal{DS}$, as no correlation is assumed between cleartext passwords and their hashes. Thus, the occurrence of passwords from $\mathcal{DS}$ in the \textit{candidate set}, is not based on how likely those passwords were, or what rank they had. If they all appear in the \textit{data set} $\mathcal{DS}$ once, they are equally likely to show up in the \textit{candidate set} $\mathcal{CS}$. 
To demonstrate this in practice, we used the \textit{target hash} \texttt{0BChrist : C6BFABA2} from Toy Example 1, and created two additional vectors ($\upsilon'$ and $\upsilon"$), which contain the same \textit{target hash}.
Each vector $\upsilon$ defines a different $\mathcal{X}_\upsilon$ decoy set. These $\upsilon$ vectors over the \textit{Rock-You} data set will produce three different \textit{candidate sets} depicted in Figure \ref{examplev}. 

\begin{table}[h]
\caption{Frequency ranking}
\center
\begin{tabular}{|ccc|}
\hline
\multicolumn{3}{|c|}{\textbf{Frequency table}}                            \\ \hline
\multicolumn{1}{|c|}{$\boldsymbol{\upsilon}$} & \multicolumn{1}{c|}{$\boldsymbol{\upsilon'}$} & $\boldsymbol{\upsilon"}$  \\ \hline
\multicolumn{1}{|c|}{tangan : 14} & \multicolumn{1}{c|}{alphan : 4} & madagascar : 312  \\ \hline
\multicolumn{1}{|c|}{horses33 : 11} & \multicolumn{1}{c|}{jenascia : 2 } & sapphire24 : 4 \\ \hline
\multicolumn{1}{|c|}{sapphire24 : 4} & \multicolumn{1}{c|}{ozitos : 1} & miyacute : 2 \\ \hline

\end{tabular}
\label{toyfreq}

\end{table}

\noindent We employed a frequency analysis and a NIST password complexity \cite{choong_united_2014} check on all three \textit{candidate sets}. The results of the frequency analysis based on the original \textit{Rock-You} database can be seen in Table \ref{toyfreq}. As for the NIST guideline, there are only two passwords fulfilling the complexity requirements, which are \texttt{"Kissarmy1!"} and \texttt{"MyP@ssw0rd!"}, whereas $\upsilon"$ didn't contain a suitable candidate. Both of these passwords had a rank one frequency, so for each vector, these two ranking strategies produced completely contradictory results. Furthermore, the ranking strategies failed to pinpoint the \textit{target hash} as a likely candidate, for all three vectors. 

Here we would like to underline a crucial idea: the client must never generate more than one $\upsilon$ vector using GEN-V, as the intersection of the corresponding $\mathcal{X}_\upsilon$ sets can narrow the search for the \textit{target hash}. Our goal here was to demonstrate that different $\upsilon$ vectors produce different candidate sets over a specific $\mathcal{DS}$.
As in general, it is not possible to accurately predict the behaviour of hypothetical individuals, who we do not know anything about, the server cannot make assumptions on what ranking strategy to use. Hence, in this case we rely on empirical evidence and highlight that applying different ranking strategies could easily mislead the server, regardless of which random subset of $\mathcal{DS}$ was selected by $\upsilon$.

\subsection{Implicit information disclosure}

Imagine a scenario, where the \textit{NATO Communications and Information Agency} (\textit{NCI Agency}) wanted to recover the cleartext for the \texttt{"58727AD23361CA0323D4B3C22A6AFE78"} \textit{NTLM hash} using a specialized \textit{PCaaS} company. After the cracking process the server side observes the \textit{candidate set}, and finds the cleartext password \texttt{Bices2014}, among the other password candidates. 
The \textit{Battlefield, Information, Collection and Exploitation Systems} (\textit{BICES}) serves as the primary intelligence-sharing network between and among all 28 \textit{NATO} member nations, seven associated partner nations and the \textit{NATO organization}.
Upon learning that the customer is NCI Agency, a potential attacker could extrapolate the following information:
\begin{itemize}
\item This password could belong to a classified \textit{NATO system} called \textit{BICES};
\item The use of special characters is not enforced by the password policy;
\item The password was possibly not changed in the last 8 years, in which case no password change policy is enforced.
\end{itemize}
Learning the client's identity is a form of information disclosure that can serve as a starting point for attackers.
\textit{NCI Agency} follows best practices that the hash of a password is confidential and should not leave the security boundary of the organization \cite{ewaida_pass--hash_2010}.
As a standalone password without the direct connection to NATO, \texttt{Bices2014} does not convey any useful information to the server. The word Bices can be associated with various concepts. For example in the Haitian \textit{Creole} language Bices means bicycles and is spelled exactly the same. Similarly, Bices is the abbreviation for a mining seminar organized annually in China.
By design, the 3PC protocol does not allow such information to be exchanged between the server and the client. However, the client must take precaution not to leak this via other channels. This is an example for information disclosure outside of the 3PC protocol. 

Although this looks like a high risk scenario on the first glance, everything depends on how the \textit{data set} was chosen.
Let the selected core dictionary contain common words, and ones related to NCI Agency, such as; NATO, BICES, North, Atlantic, Treaty, Organization, NCI, Agency, Jens, Soltenberg, etc. In addition, this can be complemented with a set of mangling rules such as; capitalizing the first letter of every key-word, appending one to three special characters, and optionally replacing relevant letters in the core words to form leet talk like N4t0. A \textit{data set} constructed with such rules, will not only have a lot of relevant passwords, but it can be argued that the probability distribution is close to uniform, as these passwords look equally likely from the server's perspective. The security parameter $r$ can be adjusted such that the \textit{candidate set} can contain millions of passwords in this format, like NC1N4T0.2022, Bic3\$pass123 etc.

What we deem acceptable from a privacy perspective greatly depends on the use case. If a company tests corporate passwords of employees in important roles, getting the password cracked by common dictionaries is a good scenario. To discover and change a weak password, is better then the inevitable data breach. 
Note, that the password is only cracked by such a \textit{cracking data set}, if the cleartext happens to be in this format.
The more difficult question comes when the password is not cracked. What if the third party continues cracking with different dictionaries, brute force attacks etc. In this case, the server runs into the problem of growing \textit{candidate sets} as we previously discussed in relation to equation \ref{growr}.

\subsection{Foul Play}
When the client side starts step CHK-CS of the protocol to see if a \textit{pre-image} for the target hash was found, as a part of this step it must investigate whether the server indeed exhausted the agreed search space and performed $|\mathcal{DS}|$ hashing operations. To confirm this, the client relies on \textit{proof of work}, which is a concept where a \textit{prover} demonstrates to a \textit{verifier} that a certain amount of a computational effort has been expended in a specified interval of time \cite{lachtar_cross-stack_2020}. In the case of 3PC the client knows that approximately $r$ candidates must be returned based on the formula, $|\mathcal{X}_\upsilon| \frac{|\mathcal{DS}|}{|\Sigma^l|} \approx r$, where $P_\upsilon(x)=1$ must be satisfied for all hash digests.

The server cannot simply fill the candidate set with randomly chosen $r$ hash digests as they need to fulfill the \textit{predicate function} for the given vector $\upsilon$. If the server selects fitting hash digests with fake \textit{cleartext passwords}, the client can simply select a random subset in $\mathcal{CS}$, and hash it to verify if it is indeed the correct one. Obviously, $r$ is only an expected value but deviating from it significantly can suggest that the server is not truthful in expending the appropriate resources.

\subsection{Plausible deniability}
If any entity upon acquiring the vector $\upsilon$ or the \textit{candidate set}, would claim that the client was trying to break a password hash $t'$ that belongs to them, they can not incriminate the client. Even if it is true that $t'\in\mathcal{X}_\upsilon$, the client can always rely on plausible deniability and say that this happened by chance. Indeed, when a specific vector $\upsilon$ is calculated for a given \textit{target hash}, it is true that any $x\in \mathcal{X}_\upsilon$ could have been the seed given to GEN-V, that produced vector $\upsilon$.
This means that the probability that an arbitrary hash digest from $\Sigma^l$ falls into $\mathcal{X}_\upsilon$ is $\frac{|\mathcal{X}_\upsilon|}{|\Sigma^l|}$, which is a non-negligible probability if the parameters were chosen properly. The client, before transferring $\upsilon$, can check if the chosen $N_\upsilon$ provides  satisfactory plausible deniability based on the scenario.

\subsection{Zero-Knowledge variation}
If the \textit{cracking data set} is small, it enables the 3PC protocol to be used in a \textit{zero-knowledge} setting.
As we noted, the upper limit for the size of $\mathcal{X}_\upsilon$, is how many \textit{candidate passwords} can be transferred and stored from a given \textit{data set}. 
If the client only needs a small \textit{cracking data set}, where it would be possible to store and transfer $|\mathcal{DS}|$ hashes and the corresponding passwords, then $N_\upsilon$ can be selected such that:  $N_\upsilon=|\mathcal{X}_\upsilon|=|\Sigma^l|$ i.e., $\{\upsilon_{2i}=F \, \wedge \, \upsilon_{2i-1}=0: \forall \,1 \leq i \leq l \}$.
Now, there is no need to calculate the predicate function as $P_\upsilon(h(s))=1, \forall s\in \Theta^*$ as the decoy set is essentially the output space of $h$. It is then trivial, that not sending the \textit{target hash} at all, or sending it in vector $\upsilon$ with the above described \textit{zero-knowledge} setting, is equivalent.

\section{Real life experiments} \label{experimantal}

In this experiment we are looking to empirically verify the following question: Is it really feasible to crack billions of hashes using the theory introduced in the 3PC protocol? In the following section, we show the protocol running on an FPGA architecture, where the second part of this section will aim to analyse and interpret the results and related privacy implications. 

\subsection{Implementation on the RIVYERA FPGA cluster}
The Cost-Optimized Parallel Code Breaker (COPACOBANA) FPGA hardware was introduced in the annual Conference on Cryptographic Hardware and Embedded Systems (CHES) in 2006 \cite{kumar_breaking_2006}. The RIVYERA FPGA hardware architecture \cite{rivyera_rivyera_s6_lx150_rev431_datasheet_nodate} is the direct successor of COPACOBANA, equipped with Spartan–6 Family FPGA modules which can be seen in Figure \ref{RIVYERA}. The demonstration of the 3PC protocol is conducted on a RIVYERA S6-LX150 cluster, containing $256$ Xilinx Spartan–6 LX150 FPGA modules. 

\begin{figure}[ht]
\centering
\includegraphics[width=0.35\textwidth]{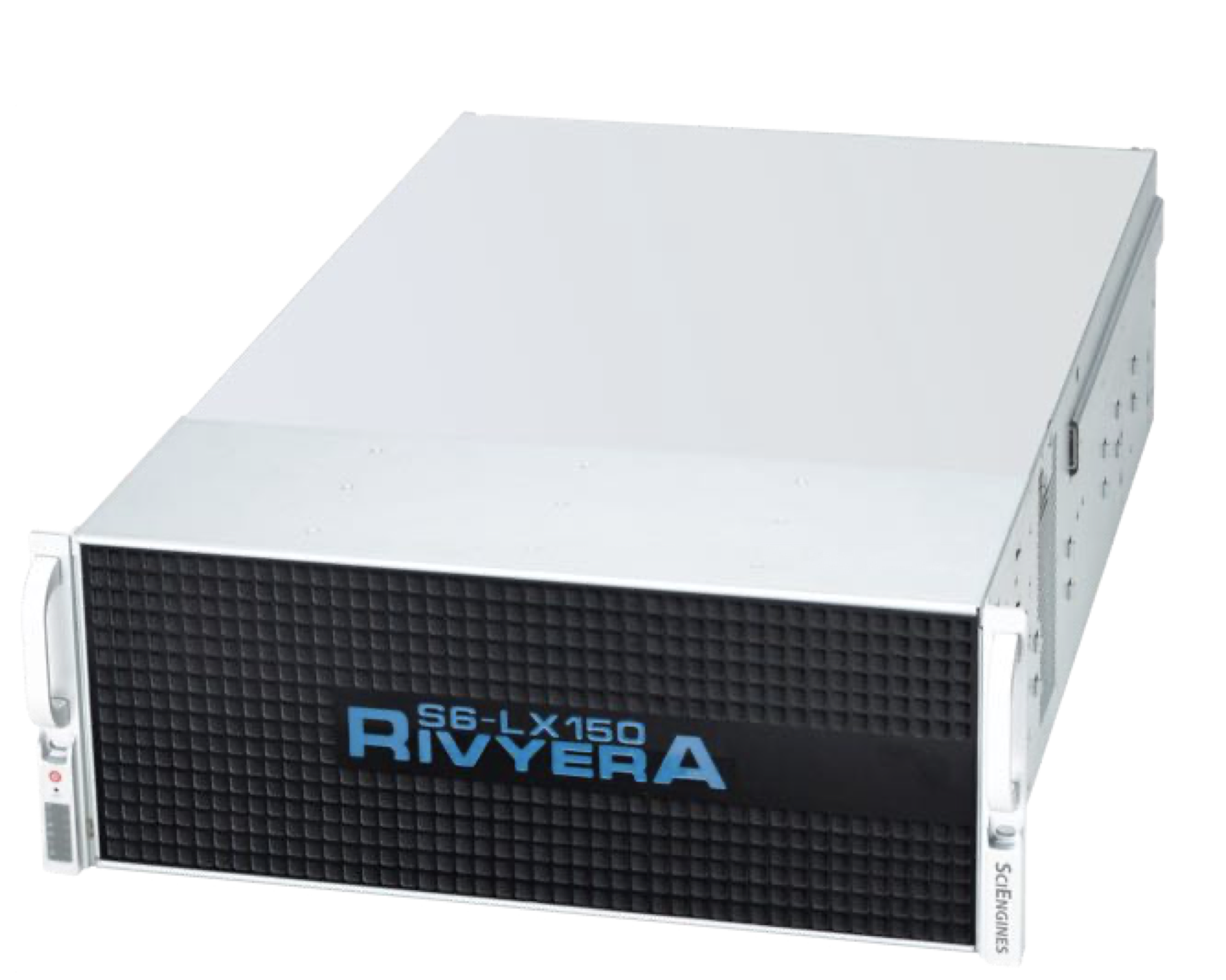}
\caption{RIVYERA S6-LX150 server \cite{rivyera_rivyera_s6_lx150_rev431_datasheet_nodate}}
\label{RIVYERA}
\end{figure}

RIVYERA is using the \textit{se\_decrypt} 3.00.08 cryptanalysis framework, developed by SciEngine\footnote{https://www.sciengines.com/it-security-solutions/cryptanalysis-tools/}, which is a password cracking tool designed to maximize the efficiency of the FPGA modules. This tool allows configuring the predicate function for a given  vector $\upsilon$, without any design changes or modification in the software.
As discussed in the introduction, the main motivation behind this paper originates from a penetration testing engagement where the Red Team was able to retrieve an \textit{NTLM} hash for an important service account. 
It was known that all service account passwords are randomly generated 9 character long strings containing uppercase and lowercase letters plus numbers. According to the signed contract revealing any cleartext passwords to third parties was prohibited. 
However, the client made an exception and approved the use of a PCaaS, if the Red-Team can assure that the third party server does not have a guess with better than $2^{-29}$ probability. 
The following example presents a realistic scenario that solves this problem using 3PC.

Let the data set $\mathcal{DS}$ be the set of all 9 character long strings containing uppercase and lowercase letters plus numbers hence, $|\mathcal{DS}|=62^9$. 
According to the specified security requirements, the server must not have a better guess than $1/\mathcal{|CS|} < 2^{-29}$. As a reminder, the 3PC protocol provides a stronger security, as the server can not make a $1/|\mathcal{CS}|$ guess on the \textit{candidate set} without making several assumptions. 
The client wanted assurance that even in the worst case scenario if $s^* \in \mathcal{DS}$ is known (in other words $\mathcal{P(B)}=1$), the maximum guessing probability of the server is $1/|\mathcal{CS}|$. 
In this example the NTLM \textit{target hash} is $t= $\texttt{8AC54208A85C340AE9B8B0CDB236F14C}. 

Compared to the GEN-V step in the 3PC protocol, where it was possible to set a degree of freedom on each hexadecimal character of the target hash, the \texttt{se\_decrypt} tool is more limited. The built-in \texttt{--hit-mask} parameter can be used as a "restricted" GEN-V function to create $\upsilon$. It only allows setting the degree of freedom by bytes (by two hex characters). 
To be more precise, the n-th bit of the hit mask (from the right) corresponds to the n-th byte of the target (also from the right). If the n-th bit of the hit mask is a "1", the n-th byte of a resulting \textit{candidate password hash} must completely match the n-th byte of the \textit{target hash}, otherwise it does not need to match. The \texttt{se\_decrypt} expects this vector $\upsilon$, and the \texttt{--hit-mask} both in hexadecimal representation. As a consequence, we can only select even powers of sixteen as the size of $N_\upsilon$.
To transform the vector $\upsilon$ which is suitable for the \texttt{se\_decrypt} tool, one needs to solve the following RIVYERA specific inequality:

\begin{equation}
\label{RIVYERAeq}
    2^{29} < \frac{|\mathcal{DS}| \cdot 16^{2x} }{|\Sigma^l|}
\end{equation}
The first $x$ which satisfies \ref{RIVYERAeq} is $x=13$, for which we have 

\begin{equation*}
   \frac{62^9 \cdot 16^{26}}{16^{32}} \approx 806\,873\,234  
\end{equation*}
Thus we expect  approximately $r \approx 806$ million candidates after hashing all possible strings in $\mathcal{DS}$. A suitable vector is $\upsilon=$\texttt{\seqsplit{[88AACC550F0F0F0F0F0F0F0F0F0F0F0F0F0F0F0F0F0F0F0F0F0F0F0F0F0F44CC]}} $ \in \Sigma^{64}$.
One can easily generate the RIVYERA specific mask on the server side namely \texttt{8AC5000000000000000000000000004C}. 
This shows that from the $32$ hexadecimal character long target hash we need to hide 26 characters ($16^{26}$) in any position. Using the \textit{se\_decrypt} tool, one can use the following parameters to start cracking the data set: \texttt{se\_decrypt -a nthash --hash 8AC5000000000000000000000000004C --hit-mask C001 --lower-case --upper-case --numbers --min-len 9 --max-len 9 --no-stop
 ---logfile candidate.txt} 
 
\noindent
Please note, that the binary representation of the \texttt{C001} parameter value is \texttt{1100000000000001}, meaning we disregard 13 bytes from the middle (26 hexadecimal characters) from the original hash. The \texttt{8AC5000000000000000000000000004C} hash together with the \texttt{--hit-mask C001} produces the same decoy hashes as the original vector $\upsilon$, therefore these are equivalent representations of the $\mathcal{X}_\upsilon$ \textit{decoy set}.

In our control test, brute forcing the whole $62^9$ key space for one NTLM hash takes approximately 19.6 hours on RIVYERA with $192$ billion hash/sec. Here comes the real benefit of the 3PC protocol: After applying the \texttt{--hit-mask C001} parameter on our modified vector $\upsilon$ defining $|X_\upsilon|=16^{26}$ hashes, the RIVYERA cluster finished the cracking process after $19.6$ hours, under the same time as it took to find a single hash, and found $806834341$ \textit{candidate passwords}. Writing all candidates to disk took less than 50 seconds which is negligible compared to the cracking time. The output of the cracking process can be seen in Figure \ref{Terminal}. 
The last step is to send the resulting file to the client, which is around $20.0$ GB. The client then simply checks whether $\exists(s,t) \in \mathcal{CS}$, which is true in our case. The cleartext corresponding to the \textit{target hash} $t$ is \texttt{"bKFQ4Q8C0"}. 

\begin{figure}[ht]
\label{RIVYERAcracked}
\centering
\includegraphics[width=0.4\textwidth]{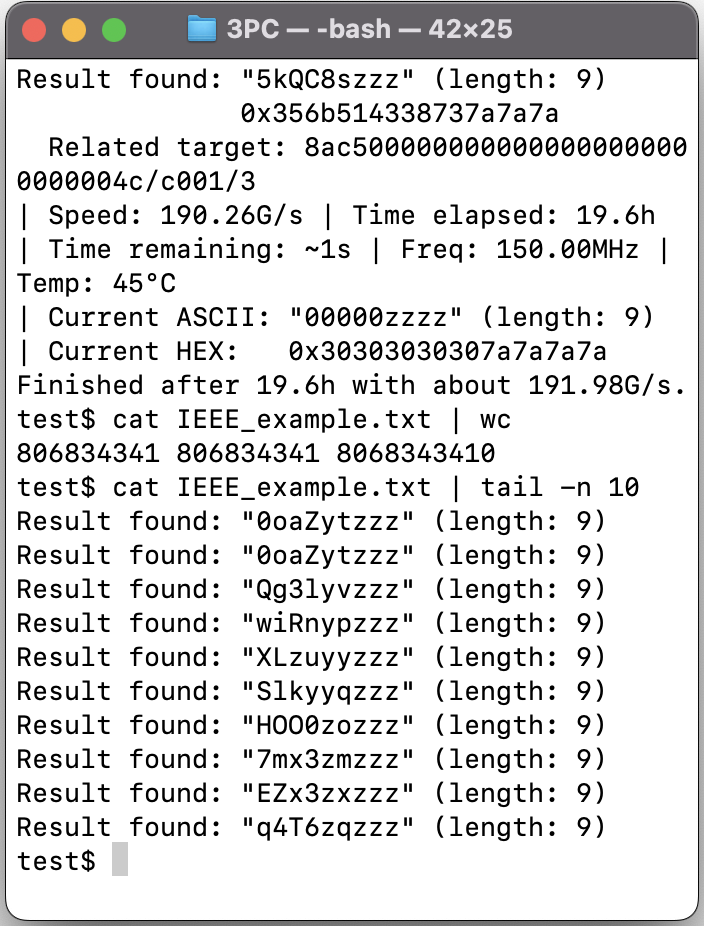}
\caption{Bash terminal output on the cracking process}
\label{Terminal}
\end{figure}

\subsection{Experimental result analysis}
A very clear advantage of 3PC is that using a predicate function allows the efficient handling of huge anonymity sets, that would otherwise be infeasible. In our experiment presented on RIVYERA, the size of the \textit{decoy set} was selected to be $|\mathcal{X}_\upsilon| = 16^{26}=20282409603651670423947251286016$. Applying the $P_\upsilon$ function to the hashed \textit{data set} produced an anonymity set of $r \approx 806$ million decoy passwords. In this example $\mathcal{CS}$ provides \textit{perfect k-anonimity} on the \textit{target hash}, as the password is hidden in a set of similar data. 

Despite the \textit{decoy set} being large, the 3PC protocol utilizes the predicate function to allow a constant time check for verifying $h(s)\in \mathcal{X}_\upsilon,$ (based on lemma \ref{ordo1}) for $\forall s\in\mathcal{DS}$.
As a result we can increase $|\mathcal{X}_\upsilon|$ without any penalties on computational speed. The only upper limit is the size of the \textit{candidate set} which the server needs to write to disk. 
In NTLM, one hash value is exactly 128 bit. From this we can easily calculate how much hard drive space it would take if one would attempt to write the content of the \textit{decoy set} to the hard drive: $(16^{26} \cdot 128) /8/1024^5 \approx 2.88 \cdot 10^{17}$, which is 3 million billion Petabytes. This would clearly be impossible to handle with today's technology. However 800 million cleartext-hash pairs in the \textit{candidate set} only takes up $20$ GB, which can be easily transferred to the client. 
To the best of our knowledge this example was the first documented case when perfect \textit{k-anonimity} was achieved during a password cracking process using a third party service.

\subsection{Password analysis and Parameter security}
What if the client was not sure if the \textit{target hash} hides a machine generated code? Would it be safe to use a brute force attack, assuming that the \textit{candidate set} provides a sufficient anonymity set for a password like \texttt{JohnnY007}?
To answer this question we have analyzed the 800 million \textit{candidate passwords} which are a random subset of the \textit{data set}, selected by the $P_\upsilon$ function. We have used the \texttt{cracklib-check} plugin, which groups passwords into six categories. From the 800 million candidates approximately $0.07\%$ of the passwords were flagged as weak.
The exact distribution for each category can be seen in TABLE \ref{passdistr}.
\begin{table}[ht]

\caption{Strength distribution of randomly generated passwords}
\centering
\begin{tabular}{|l|l|l|}
\hline
\textbf{Category}  & \textbf{Number of passwords} & \textbf{Percentage} \\ \hline
The password is random (\texttt{OK})       & 806263493           & 99.93\%   \\ \hline
The password is too simple   & 532747              & 0.066\%     \\ \hline
Based on a dictionary word     & 11763               & 0.00146\%   \\ \hline
Reversed dictionary word  & 11895               & 0.00147\%   \\ \hline
Not enough different characters  & 6111                & 0.00076    \\ \hline
Looks like an Insurance number      & 8332                & 0.001\%    \\ \hline
\end{tabular}
\label{passdistr}
\end{table}

As an example we list below some of the dictionary based words that the tool found.

\begin{itemize}
    \item \texttt{WnbBATman} : This password can be interpreted as the famous superhero of the Warner Brothers. "WnB Batman".
    \item \texttt{MRloNDon6}: Starts with a Title, "MR", followed by a dictionary word, and ends with a number. "Mr London 6" can also be mistaken as human generated. 
    \item \texttt{PaYpaLQSC}: What about a password used by employees from the Paypal Quality Service Center?
    \item \texttt{goOgLEFOG}: This can be interpreted as the  Google Cloud Platform API solution "Google Fog".
\end{itemize}
Although there are a good number of passwords that can be easily classified as human generated, this set would not provide perfect k-anonymity for all human generated passwords.
The occurrence of shorter dictionary words is higher, but passwords that use all 9 characters in a plausible sequence are less common. This is to be expected, as the protocol is designed to provide an anonymity set for data that makes up the \textit{cracking data set}.
Therefore, we provide the following guidelines for practical applications.

If the client possesses no implicit information on the \textit{pre-image} of the \textit{target hash}, the best approach is to start with a hybrid attack, based on a core dictionary and a rule-set. A smaller \textit{data set} with a larger $N_\upsilon$, for which the client can repeatedly switch between the \textit{data sets} of similar size, without changing $\mathcal{X}_\upsilon$ has an added benefit.
This will make it impossible for the server to conduct attacks with significantly bigger \textit{data sets} such as brute force attacks, as the resulting \textit{candidate set} for the same $N_\upsilon$ could reach sizes of hundreds of Petabytes, making it impossible to store or evaluate the results.
If these cracking sessions are proven to be unfruitful, the client could choose to pivot to brute force attacks. If this entails a significantly larger \textit{data set}, the client can shrink $\mathcal{X}_\upsilon$, but never create a new vector $\upsilon$ through GEN-V.

\section{Conclusion} \label{summary}
This paper introduces privacy-preserving password cracking, showing how the computational resources of an untrusted third party can be used to crack a password hash. The anonymity sets that hide the target hash, would make it impossible in traditional cracking scenarios to process or store this number of hashes and the corresponding cleartext results. We circumvented this by extending the theory of predicate functions to operate on the output of hash functions.
On top of this, we demonstrated that increasing the number of decoy hashes bears no impact on the hash rate, making the 3PC protocol very efficient. 
The implementation of the protocol was shown through two Toy Examples, and one real life implementation running on the RIVYERA FGPA cluster.
Through these, we were able to verify our original goals we set out to examine.
The server only learns probabilistic information both on the \textit{target hash} and the \textit{cleartext password}. The protocol is resistant against \textit{foul play}, where the server gains no tangible advantage towards learning the target hash, or it's \textit{pre-image} by not following the steps of the protocol. The protocol ensures \textit{plausible deniability}, where the client can claim to have aimed for a different target. Through a proof of work scheme, the client can have a statistical argument if they suspect that the requested search space has not been exhausted.
The empirical tests and the theoretical analysis suggests that the 3PC protocol is suitable for practical use both from a security, a privacy, and an efficiency perspective.

\appendices

\ifCLASSOPTIONcompsoc
  \section*{Acknowledgments}
\else
  \section*{Acknowledgment}
\fi

The authors would like to thank Dr. Axel Y. Poschmann for the constructive brainstorming sessions, and Dr. Lothar Fritsch for his insightful comments.
  
\ifCLASSOPTIONcaptionsoff
  \newpage
\fi



\begin{thebibliography}{10}
\providecommand{\url}[1]{#1}
\csname url@samestyle\endcsname
\providecommand{\newblock}{\relax}
\providecommand{\bibinfo}[2]{#2}
\providecommand{\BIBentrySTDinterwordspacing}{\spaceskip=0pt\relax}
\providecommand{\BIBentryALTinterwordstretchfactor}{4}
\providecommand{\BIBentryALTinterwordspacing}{\spaceskip=\fontdimen2\font plus
\BIBentryALTinterwordstretchfactor\fontdimen3\font minus
  \fontdimen4\font\relax}
\providecommand{\BIBforeignlanguage}[2]{{%
\expandafter\ifx\csname l@#1\endcsname\relax
\typeout{** WARNING: IEEEtran.bst: No hyphenation pattern has been}%
\typeout{** loaded for the language `#1'. Using the pattern for}%
\typeout{** the default language instead.}%
\else
\language=\csname l@#1\endcsname
\fi
#2}}
\providecommand{\BIBdecl}{\relax}
\BIBdecl

\bibitem{ogorman_comparing_2003}
\BIBentryALTinterwordspacing
L.~O'Gorman, ``\BIBforeignlanguage{en}{Comparing passwords, tokens, and
  biometrics for user authentication},''
  \emph{\BIBforeignlanguage{en}{Proceedings of the IEEE}}, vol.~91, no.~12, pp.
  2021--2040, Dec. 2003. [Online]. Available:
  \url{http://ieeexplore.ieee.org/document/1246384/}
\BIBentrySTDinterwordspacing

\bibitem{srinivas_one_2022}
\BIBentryALTinterwordspacing
S.~Srinivas, ``\BIBforeignlanguage{en-us}{One step closer to a passwordless
  future},'' May 2022. [Online]. Available:
  \url{https://blog.google/technology/safety-security/one-step-closer-to-a-passwordless-future/}
\BIBentrySTDinterwordspacing

\bibitem{bonneau_quest_2012}
J.~Bonneau, C.~Herley, P.~C.~v. Oorschot, and F.~Stajano, ``The {Quest} to
  {Replace} {Passwords}: {A} {Framework} for {Comparative} {Evaluation} of
  {Web} {Authentication} {Schemes},'' in \emph{2012 {IEEE} {Symposium} on
  {Security} and {Privacy}}, May 2012, pp. 553--567, iSSN: 2375-1207.

\bibitem{siddique_biometrics_2017}
\BIBentryALTinterwordspacing
K.~Siddique, Z.~Akhtar, and Y.~Kim, ``\BIBforeignlanguage{en}{Biometrics vs
  passwords: a modern version of the tortoise and the hare},''
  \emph{\BIBforeignlanguage{en}{Computer Fraud \& Security}}, vol. 2017, no.~1,
  pp. 13--17, Jan. 2017. [Online]. Available:
  \url{https://linkinghub.elsevier.com/retrieve/pii/S1361372317300076}
\BIBentrySTDinterwordspacing

\bibitem{grassi_digital_2017}
\BIBentryALTinterwordspacing
P.~A. Grassi, J.~L. Fenton, E.~M. Newton, R.~A. Perlner, A.~R. Regenscheid,
  W.~E. Burr, J.~P. Richer, N.~B. Lefkovitz, J.~M. Danker, Y.-Y. Choong, K.~K.
  Greene, and M.~F. Theofanos, ``\BIBforeignlanguage{en}{Digital identity
  guidelines: authentication and lifecycle management},'' National Institute of
  Standards and Technology, Tech. Rep., Jun. 2017. [Online]. Available:
  \url{https://nvlpubs.nist.gov/nistpubs/SpecialPublications/NIST.SP.800-63b.pdf}
\BIBentrySTDinterwordspacing

\bibitem{tihanyi_unrevealed_2015}
N.~Tihanyi, A.~Kovacs, G.~Vargha, and A.~Lenart,
  ``\BIBforeignlanguage{en}{Unrevealed {Patterns} in {Password} {Databases}
  {Part} {One}: {Analyses} of {Cleartext} {Passwords}},'' in
  \emph{\BIBforeignlanguage{en}{Technology and {Practice} of {Passwords}}},
  ser. Lecture {Notes} in {Computer} {Science}, S.~F. Mjolsnes, Ed.\hskip 1em
  plus 0.5em minus 0.4em\relax Cham: Springer International Publishing, 2015,
  pp. 89--101.

\bibitem{kamal_security_2019}
\BIBentryALTinterwordspacing
P.~Kamal, ``\BIBforeignlanguage{en}{Security of {Password} {Hashing} in
  {Cloud}},'' \emph{\BIBforeignlanguage{en}{Journal of Information Security}},
  vol.~10, no.~2, pp. 45--68, Feb. 2019, number: 2 Publisher: Scientific
  Research Publishing. [Online]. Available:
  \url{http://www.scirp.org/Journal/Paperabs.aspx?paperid=90861}
\BIBentrySTDinterwordspacing

\bibitem{bonneau_passwords_2015}
\BIBentryALTinterwordspacing
J.~Bonneau, C.~Herley, P.~C. van Oorschot, and F.~Stajano,
  ``\BIBforeignlanguage{en}{Passwords and the evolution of imperfect
  authentication},'' \emph{\BIBforeignlanguage{en}{Communications of the ACM}},
  vol.~58, no.~7, pp. 78--87, Jun. 2015. [Online]. Available:
  \url{https://dl.acm.org/doi/10.1145/2699390}
\BIBentrySTDinterwordspacing

\bibitem{dellamico_password_2010}
M.~Dell'Amico, P.~Michiardi, and Y.~Roudier, ``Password {Strength}: {An}
  {Empirical} {Analysis},'' in \emph{2010 {Proceedings} {IEEE} {INFOCOM}}, Mar.
  2010, pp. 1--9, iSSN: 0743-166X.

\bibitem{weir_testing_2010}
\BIBentryALTinterwordspacing
M.~Weir, S.~Aggarwal, M.~Collins, and H.~Stern,
  ``\BIBforeignlanguage{en}{Testing metrics for password creation policies by
  attacking large sets of revealed passwords},'' in
  \emph{\BIBforeignlanguage{en}{Proceedings of the 17th {ACM} conference on
  {Computer} and communications security - {CCS} '10}}.\hskip 1em plus 0.5em
  minus 0.4em\relax Chicago, Illinois, USA: ACM Press, 2010, p. 162. [Online].
  Available: \url{http://portal.acm.org/citation.cfm?doid=1866307.1866327}
\BIBentrySTDinterwordspacing

\bibitem{egelman_passwords_2011}
\BIBentryALTinterwordspacing
S.~M. Egelman, S.~Komanduri, R.~Shay, P.~G. Kelley, M.~L. Mazurek, L.~Bauer,
  N.~Christin, and L.~F. Cranor, ``\BIBforeignlanguage{en}{Of {Passwords} and
  {People}: {Measuring} the {Effect} of {Password}-{Composition} {Policies}},''
  \emph{\BIBforeignlanguage{en}{NIST}}, May 2011, last Modified:
  2017-02-19T20:02-05:00 Publisher: Serge M. Egelman, Saranga Komanduri,
  Richard Shay, Patrick G. Kelley, Michelle L. Mazurek, Lujo Bauer, Nicolas
  Christin, Lorrie F. Cranor. [Online]. Available:
  \url{https://www.nist.gov/publications/passwords-and-people-measuring-effect-password-composition-policies}
\BIBentrySTDinterwordspacing

\bibitem{choong_united_2014}
\BIBentryALTinterwordspacing
Y.-Y. Choong, M.~F. Theofanos, and H.-K. Liu, ``\BIBforeignlanguage{en}{United
  {States} {Federal} {Employees}' {Password} {Management} {Behaviors} \&\#150;
  {A} {Department} of {Commerce} {Case} {Study}},''
  \emph{\BIBforeignlanguage{en}{NIST}}, Apr. 2014, last Modified:
  2018-11-10T10:11-05:00 Publisher: Yee-Yin Choong, Mary F. Theofanos,
  Hung-Kung Liu. [Online]. Available: \url{https://www.nist.gov/publications}
\BIBentrySTDinterwordspacing

\bibitem{ewaida_pass--hash_2010}
\BIBentryALTinterwordspacing
B.~Ewaida, ``\BIBforeignlanguage{en}{Pass-the-hash attacks: {Tools} and
  {Mitigation}},'' 2010. [Online]. Available:
  \url{https://www.sans.org/white-papers/33283/}
\BIBentrySTDinterwordspacing

\bibitem{li_personal_2017}
\BIBentryALTinterwordspacing
Y.~Li, H.~Wang, and K.~Sun, ``\BIBforeignlanguage{en}{Personal {Information} in
  {Passwords} and {Its} {Security} {Implications}},''
  \emph{\BIBforeignlanguage{en}{IEEE Transactions on Information Forensics and
  Security}}, vol.~12, no.~10, pp. 2320--2333, Oct. 2017. [Online]. Available:
  \url{http://ieeexplore.ieee.org/document/7931642/}
\BIBentrySTDinterwordspacing

\bibitem{wang_birthday_2019}
D.~Wang, P.~Wang, D.~He, and Y.~Tian, ``Birthday, name and bifacial-security:
  understanding passwords of {Chinese} web users,'' in \emph{Proceedings of the
  28th {USENIX} {Conference} on {Security} {Symposium}}, ser. {SEC}'19.\hskip
  1em plus 0.5em minus 0.4em\relax USA: USENIX Association, Aug. 2019, pp.
  1537--1554.

\bibitem{veras_semantic_2014}
\BIBentryALTinterwordspacing
R.~Veras, C.~Collins, and J.~Thorpe, ``\BIBforeignlanguage{en}{On the
  {Semantic} {Patterns} of {Passwords} and their {Security} {Impact}},'' in
  \emph{\BIBforeignlanguage{en}{Proceedings 2014 {Network} and {Distributed}
  {System} {Security} {Symposium}}}.\hskip 1em plus 0.5em minus 0.4em\relax San
  Diego, CA: Internet Society, 2014. [Online]. Available:
  \url{https://www.ndss-symposium.org/ndss2014/programme/semantic-patterns-passwords-and-their-security-impact/}
\BIBentrySTDinterwordspacing

\bibitem{rivest_databanks_1987}
\BIBentryALTinterwordspacing
R.~L. Rivest, L.~Adelmann, and M.~L. Dertouzos, ``On {DataBanks} {And}
  {Privacy} {Homomorphisms},'' \emph{Foundations of Secure Computation}, 1987.
  [Online]. Available:
  \url{http://people.csail.mit.edu/rivest/RivestAdlemanDertouzos-OnDataBanksAndPrivacyHomomorphisms.pdf}
\BIBentrySTDinterwordspacing

\bibitem{alloghani_systematic_2019}
\BIBentryALTinterwordspacing
M.~Alloghani, M.~M.~Alani, D.~Al-Jumeily, T.~Baker, J.~Mustafina, A.~Hussain,
  and A.~J.~Aljaaf, ``\BIBforeignlanguage{en}{A systematic review on the status
  and progress of homomorphic encryption technologies},''
  \emph{\BIBforeignlanguage{en}{Journal of Information Security and
  Applications}}, vol.~48, p. 102362, Oct. 2019. [Online]. Available:
  \url{https://www.sciencedirect.com/science/article/pii/S2214212618306057}
\BIBentrySTDinterwordspacing

\bibitem{parmar_survey_2014}
P.~Parmar, S.~Padhar, S.~Patel, N.~Bhatt, and R.~Jhaveri, ``Survey of {Various}
  {Homomorphic} {Encryption} algorithms and {Schemes},'' \emph{International
  Journal of Computer Applications}, vol.~91, Mar. 2014.

\bibitem{acar_survey_2018}
\BIBentryALTinterwordspacing
A.~Acar, H.~Aksu, A.~S. Uluagac, and M.~Conti, ``A {Survey} on {Homomorphic}
  {Encryption} {Schemes}: {Theory} and {Implementation},'' \emph{ACM Computing
  Surveys}, vol.~51, no.~4, pp. 79:1--79:35, Jul. 2018. [Online]. Available:
  \url{https://doi.org/10.1145/3214303}
\BIBentrySTDinterwordspacing

\bibitem{hoover_software_1999}
\BIBentryALTinterwordspacing
D.~Hoover and B.~Kausik, ``\BIBforeignlanguage{en}{Software smart cards via
  cryptographic camouflage},'' in \emph{\BIBforeignlanguage{en}{Proceedings of
  the 1999 {IEEE} {Symposium} on {Security} and {Privacy} ({Cat}.
  {No}.{99CB36344})}}.\hskip 1em plus 0.5em minus 0.4em\relax Oakland, CA, USA:
  IEEE Comput. Soc, 1999, pp. 208--215. [Online]. Available:
  \url{http://ieeexplore.ieee.org/document/766915/}
\BIBentrySTDinterwordspacing

\bibitem{bojinov_kamouflage_2010}
H.~Bojinov, E.~Bursztein, X.~Boyen, and D.~Boneh,
  ``\BIBforeignlanguage{en}{Kamouflage: {Loss}-{Resistant} {Password}
  {Management}},'' in \emph{\BIBforeignlanguage{en}{Computer {Security} –
  {ESORICS} 2010}}, ser. Lecture {Notes} in {Computer} {Science}, D.~Gritzalis,
  B.~Preneel, and M.~Theoharidou, Eds.\hskip 1em plus 0.5em minus 0.4em\relax
  Berlin, Heidelberg: Springer, 2010, pp. 286--302.

\bibitem{juels_honey_2014}
A.~Juels and T.~Ristenpart, ``Honey {Encryption}: {Security} {Beyond} the
  {Brute}-{Force} {Bound},'' May 2014.

\bibitem{juels_honeywords_2013}
\BIBentryALTinterwordspacing
A.~Juels and R.~L. Rivest, ``\BIBforeignlanguage{en}{Honeywords: making
  password-cracking detectable},'' in \emph{\BIBforeignlanguage{en}{Proceedings
  of the 2013 {ACM} {SIGSAC} conference on {Computer} \& communications
  security - {CCS} '13}}.\hskip 1em plus 0.5em minus 0.4em\relax Berlin,
  Germany: ACM Press, 2013, pp. 145--160. [Online]. Available:
  \url{http://dl.acm.org/citation.cfm?doid=2508859.2516671}
\BIBentrySTDinterwordspacing

\bibitem{jennifer_pullman_protect_2019}
\BIBentryALTinterwordspacing
{Jennifer Pullman}, {Kurt Thomas}, and {Elie Bursztein},
  ``\BIBforeignlanguage{en}{Protect your accounts from data breaches with
  {Password} {Checkup}},'' 2019. [Online]. Available:
  \url{https://security.googleblog.com/2019/02/protect-your-accounts-from-data.html}
\BIBentrySTDinterwordspacing

\bibitem{thomas_protecting_2019}
\BIBentryALTinterwordspacing
K.~Thomas, J.~Pullman, K.~Yeo, A.~Raghunathan, P.~G. Kelley, L.~Invernizzi,
  B.~Benko, T.~Pietraszek, S.~Patel, D.~Boneh, and E.~Bursztein,
  ``\BIBforeignlanguage{en}{Protecting accounts from credential stuffing with
  password breach alerting},'' 2019, pp. 1556--1571. [Online]. Available:
  \url{https://www.usenix.org/conference/usenixsecurity19/presentation/thomas}
\BIBentrySTDinterwordspacing

\bibitem{li_protocols_2019}
\BIBentryALTinterwordspacing
L.~Li, B.~Pal, J.~Ali, N.~Sullivan, R.~Chatterjee, and T.~Ristenpart,
  ``\BIBforeignlanguage{en}{Protocols for {Checking} {Compromised}
  {Credentials}},'' in \emph{\BIBforeignlanguage{en}{Proceedings of the 2019
  {ACM} {SIGSAC} {Conference} on {Computer} and {Communications}
  {Security}}}.\hskip 1em plus 0.5em minus 0.4em\relax London United Kingdom:
  ACM, Nov. 2019, pp. 1387--1403. [Online]. Available:
  \url{https://dl.acm.org/doi/10.1145/3319535.3354229}
\BIBentrySTDinterwordspacing

\bibitem{hou_new_2022}
\BIBentryALTinterwordspacing
Z.~Hou and D.~Wang, ``\BIBforeignlanguage{en}{New {Observations} on {Zipf}’s
  {Law} in {Passwords}},'' \emph{\BIBforeignlanguage{en}{IEEE Transactions on
  Information Forensics and Security}}, pp. 1--1, 2022. [Online]. Available:
  \url{https://ieeexplore.ieee.org/document/9777714/}
\BIBentrySTDinterwordspacing

\bibitem{bonneau_science_2012}
J.~Bonneau, ``The {Science} of {Guessing}: {Analyzing} an {Anonymized} {Corpus}
  of 70 {Million} {Passwords},'' in \emph{2012 {IEEE} {Symposium} on {Security}
  and {Privacy}}, May 2012, pp. 538--552, iSSN: 2375-1207.

\bibitem{blocki_differentially_2016}
\BIBentryALTinterwordspacing
J.~Blocki, A.~Datta, and J.~Bonneau, ``\BIBforeignlanguage{en}{Differentially
  {Private} {Password} {Frequency} {Lists}},'' in
  \emph{\BIBforeignlanguage{en}{Proceedings 2016 {Network} and {Distributed}
  {System} {Security} {Symposium}}}.\hskip 1em plus 0.5em minus 0.4em\relax San
  Diego, CA: Internet Society, 2016. [Online]. Available:
  \url{https://www.ndss-symposium.org/wp-content/uploads/2017/09/differentially-private-password-frequency-lists.pdf}
\BIBentrySTDinterwordspacing

\bibitem{malone_investigating_2011}
D.~Malone and K.~Maher, ``Investigating the {Distribution} of {Password}
  {Choices},'' \emph{Computing Research Repository - CORR}, Apr. 2011.

\bibitem{aggarwal_new_2018}
\BIBentryALTinterwordspacing
S.~Aggarwal, S.~Houshmand, and M.~Weir, ``\BIBforeignlanguage{en}{New
  {Technologies} in {Password} {Cracking} {Techniques}},'' in
  \emph{\BIBforeignlanguage{en}{Cyber {Security}: {Power} and {Technology}}},
  ser. Intelligent {Systems}, {Control} and {Automation}: {Science} and
  {Engineering}, M.~Lehto and P.~Neittaanmäki, Eds.\hskip 1em plus 0.5em minus
  0.4em\relax Cham: Springer International Publishing, 2018, pp. 179--198.
  [Online]. Available: \url{https://doi.org/10.1007/978-3-319-75307-2_11}
\BIBentrySTDinterwordspacing

\bibitem{weir_password_2009}
M.~Weir, S.~Aggarwal, B.~d. Medeiros, and B.~Glodek, ``Password {Cracking}
  {Using} {Probabilistic} {Context}-{Free} {Grammars},'' in \emph{2009 30th
  {IEEE} {Symposium} on {Security} and {Privacy}}, May 2009, pp. 391--405,
  iSSN: 2375-1207.

\bibitem{kanta_pcwq_2021}
\BIBentryALTinterwordspacing
A.~Kanta, I.~Coisel, and M.~Scanlon, ``{PCWQ}: {A} {Framework} for {Evaluating}
  {Password} {Cracking} {Wordlist} {Quality},'' \emph{The 12th EAI
  International Conference on Digital Forensics and Cyber Crime}, Dec. 2021,
  publisher: Springer. [Online]. Available:
  \url{https://markscanlon.co/papers/PasswordCrackingWordlistQuality.php}
\BIBentrySTDinterwordspacing

\bibitem{kanta_how_2021}
A.~Kanta, S.~Coray, I.~Coisel, and M.~Scanlon, ``How viable is password
  cracking in digital forensic investigation? {Analyzing} the guessability of
  over 3.9 billion real-world accounts,'' \emph{Digit. Investig.}, 2021.

\bibitem{galbally_new_2017-1}
\BIBentryALTinterwordspacing
J.~Galbally, I.~Coisel, and I.~Sanchez, ``\BIBforeignlanguage{en}{A {New}
  {Multimodal} {Approach} for {Password} {Strength} {Estimation} {Part} {I}:
  {Theory} and {Algorithms}},'' \emph{\BIBforeignlanguage{en}{IEEE Transactions
  on Information Forensics and Security}}, vol.~12, no.~12, pp. 2829--2844,
  Dec. 2017. [Online]. Available:
  \url{https://ieeexplore.ieee.org/document/7776908/}
\BIBentrySTDinterwordspacing

\bibitem{wang_fuzzypsm_2016}
\BIBentryALTinterwordspacing
D.~Wang, D.~He, H.~Cheng, and P.~Wang, ``\BIBforeignlanguage{en}{{fuzzyPSM}:
  {A} {New} {Password} {Strength} {Meter} {Using} {Fuzzy} {Probabilistic}
  {Context}-{Free} {Grammars}},'' in \emph{\BIBforeignlanguage{en}{2016 46th
  {Annual} {IEEE}/{IFIP} {International} {Conference} on {Dependable} {Systems}
  and {Networks} ({DSN})}}.\hskip 1em plus 0.5em minus 0.4em\relax Toulouse,
  France: IEEE, Jun. 2016, pp. 595--606. [Online]. Available:
  \url{http://ieeexplore.ieee.org/document/7579775/}
\BIBentrySTDinterwordspacing

\bibitem{oesch_that_2019}
\BIBentryALTinterwordspacing
S.~Oesch and S.~Ruoti, ``That {Was} {Then}, {This} {Is} {Now}: {A} {Security}
  {Evaluation} of {Password} {Generation}, {Storage}, and {Autofill} in
  {Thirteen} {Password} {Managers},'' Dec. 2019, arXiv:1908.03296 [cs].
  [Online]. Available: \url{http://arxiv.org/abs/1908.03296}
\BIBentrySTDinterwordspacing

\bibitem{galbally_new_2017}
J.~Galbally, I.~Coisel, and I.~Sanchez, ``A {New} {Multimodal} {Approach} for
  {Password} {Strength} {Estimation}. {Part} {II}: {Experimental}
  {Evaluation},'' \emph{IEEE Transactions on Information Forensics and
  Security}, vol.~12, no.~12, pp. 2845--2860, Dec. 2017, conference Name: IEEE
  Transactions on Information Forensics and Security.

\bibitem{gennaro_predicate_2015}
\BIBentryALTinterwordspacing
S.~Gorbunov, V.~Vaikuntanathan, and H.~Wee, ``\BIBforeignlanguage{en}{Predicate
  {Encryption} for {Circuits} from {LWE}},'' in
  \emph{\BIBforeignlanguage{en}{Advances in {Cryptology} -- {CRYPTO} 2015}},
  R.~Gennaro and M.~Robshaw, Eds.\hskip 1em plus 0.5em minus 0.4em\relax
  Berlin, Heidelberg: Springer Berlin Heidelberg, 2015, vol. 9216, pp.
  503--523, series Title: Lecture Notes in Computer Science. [Online].
  Available: \url{http://link.springer.com/10.1007/978-3-662-48000-7_25}
\BIBentrySTDinterwordspacing

\bibitem{lenstra_factoring_1982}
\BIBentryALTinterwordspacing
A.~K. Lenstra, H.~W. Lenstra, and L.~Lovász,
  ``\BIBforeignlanguage{en}{Factoring polynomials with rational
  coefficients},'' \emph{\BIBforeignlanguage{en}{Mathematische Annalen}}, vol.
  261, no.~4, pp. 515--534, Dec. 1982. [Online]. Available:
  \url{https://doi.org/10.1007/BF01457454}
\BIBentrySTDinterwordspacing

\bibitem{sweeney_k-anonymity_2002}
\BIBentryALTinterwordspacing
L.~Sweeney, ``\BIBforeignlanguage{en}{k-{ANONYMITY}: {A} {MODEL} {FOR}
  {PROTECTING} {PRIVACY}},'' \emph{\BIBforeignlanguage{en}{International
  Journal of Uncertainty, Fuzziness and Knowledge-Based Systems}}, vol.~10,
  no.~05, pp. 557--570, Oct. 2002. [Online]. Available:
  \url{https://www.worldscientific.com/doi/abs/10.1142/S0218488502001648}
\BIBentrySTDinterwordspacing

\bibitem{bayardo_data_2005}
R.~Bayardo and R.~Agrawal, ``Data privacy through optimal k-anonymization,'' in
  \emph{21st {International} {Conference} on {Data} {Engineering} ({ICDE}'05)},
  Apr. 2005, pp. 217--228, iSSN: 2375-026X.

\bibitem{pfitzmann_anonymity_2008}
\BIBentryALTinterwordspacing
A.~Pfitzmann, T.~Dresden, M.~Hansen, and U.~Kiel,
  ``\BIBforeignlanguage{en}{Anonymity, {Unlinkability}, {Undetectability},
  {Unobservability}, {Pseudonymity}, and {Identity} {Management} – {A}
  {Consolidated} {Proposal} for {Terminology}},''
  \emph{\BIBforeignlanguage{en}{Citeseer}}, 2008. [Online]. Available:
  \url{http://dud.inf.tu-dresden.de/Anon_Terminology.shtml}
\BIBentrySTDinterwordspacing

\bibitem{markert_this_2020}
P.~Markert, D.~V. Bailey, M.~Golla, M.~Dürmuth, and A.~J. Aviv, ``This {PIN}
  {Can} {Be} {Easily} {Guessed}: {Analyzing} the {Security} of {Smartphone}
  {Unlock} {PINs},'' in \emph{2020 {IEEE} {Symposium} on {Security} and
  {Privacy} ({SP})}, May 2020, pp. 286--303, iSSN: 2375-1207.

\bibitem{wang_understanding_2017}
\BIBentryALTinterwordspacing
D.~Wang, Q.~Gu, X.~Huang, and P.~Wang, ``\BIBforeignlanguage{en}{Understanding
  {Human}-{Chosen} {PINs}: {Characteristics}, {Distribution} and {Security}},''
  in \emph{\BIBforeignlanguage{en}{Proceedings of the 2017 {ACM} on {Asia}
  {Conference} on {Computer} and {Communications} {Security}}}.\hskip 1em plus
  0.5em minus 0.4em\relax Abu Dhabi United Arab Emirates: ACM, Apr. 2017, pp.
  372--385. [Online]. Available:
  \url{https://dl.acm.org/doi/10.1145/3052973.3053031}
\BIBentrySTDinterwordspacing

\bibitem{li_leet_2021}
\BIBentryALTinterwordspacing
W.~Li and J.~Zeng, ``\BIBforeignlanguage{en}{Leet {Usage} and {Its} {Effect} on
  {Password} {Security}},'' \emph{\BIBforeignlanguage{en}{IEEE Transactions on
  Information Forensics and Security}}, vol.~16, pp. 2130--2143, 2021.
  [Online]. Available: \url{https://ieeexplore.ieee.org/document/9316928/}
\BIBentrySTDinterwordspacing

\bibitem{wang_zipfs_2017}
\BIBentryALTinterwordspacing
D.~Wang, H.~Cheng, P.~Wang, X.~Huang, and G.~Jian,
  ``\BIBforeignlanguage{en}{Zipf’s {Law} in {Passwords}},''
  \emph{\BIBforeignlanguage{en}{IEEE Transactions on Information Forensics and
  Security}}, vol.~12, no.~11, pp. 2776--2791, Nov. 2017. [Online]. Available:
  \url{http://ieeexplore.ieee.org/document/7961213/}
\BIBentrySTDinterwordspacing

\bibitem{lachtar_cross-stack_2020}
N.~Lachtar, A.~A. Elkhail, A.~Bacha, and H.~Malik, ``A {Cross}-{Stack}
  {Approach} {Towards} {Defending} {Against} {Cryptojacking},'' \emph{IEEE
  Computer Architecture Letters}, vol.~19, no.~2, pp. 126--129, Jul. 2020,
  conference Name: IEEE Computer Architecture Letters.

\bibitem{kumar_breaking_2006}
S.~Kumar, C.~Paar, J.~Pelzl, G.~Pfeiffer, and M.~Schimmler,
  ``\BIBforeignlanguage{en}{Breaking {Ciphers} with {COPACOBANA} –{A}
  {Cost}-{Optimized} {Parallel} {Code} {Breaker}},'' in
  \emph{\BIBforeignlanguage{en}{Cryptographic {Hardware} and {Embedded}
  {Systems} - {CHES} 2006}}, ser. Lecture {Notes} in {Computer} {Science},
  L.~Goubin and M.~Matsui, Eds.\hskip 1em plus 0.5em minus 0.4em\relax Berlin,
  Heidelberg: Springer, 2006, pp. 101--118.

\bibitem{rivyera_rivyera_s6_lx150_rev431_datasheet_nodate}
\BIBentryALTinterwordspacing
RIVYERA, ``{RIVYERA}\_s6\_lx150\_rev431\_datasheet.'' [Online]. Available:
  \url{https://www.sciengines.com/wp-content/uploads/RIVYERA_S6_LX150_REV431_DATASHEET_4HU.pdf}
\BIBentrySTDinterwordspacing

\end{thebibliography}
\end{document}